\documentclass[12pt]{article}
\usepackage[usenames]{color}
\usepackage[ukrainian,english]{babel}
\usepackage[cp1251]{inputenc}
\usepackage{doc}
\usepackage{amsmath}
\usepackage{amsfonts}   
\usepackage{cite} 
\usepackage{fullpage}                           
\usepackage[lmargin=0.6in,rmargin=0.6in,tmargin=0.6in,headsep=.2in]{geometry}
\usepackage{graphicx}
\usepackage{forloop}
\usepackage{etoolbox}
\usepackage{calc}
\usepackage{wrapfig,lipsum,booktabs} 
\usepackage{xtab} 
\usepackage[dvipsnames]{xcolor}
\usepackage[normalem]{ulem}
\usepackage{sectsty}
\usepackage{amsmath,amsfonts,amssymb,amsthm,epsfig,epstopdf,url,array}
 \usepackage[retainorgcmds]{IEEEtrantools}
 \usepackage{makeidx,epsfig,lscape}
 \usepackage{xcolor,pict2e}
\usepackage{upgreek}
   
\makeatother
\usepackage{makeidx}
\makeindex
\makeatletter
\renewcommand{\@biblabel}[1]{#1.}

\newtheorem{te}{Theorem}           
\newtheorem{de}{Definition}          
\newtheorem{ce}{Consequence}    
\newtheorem{prope}{Proposition} 
\newtheorem{lemma}{Lemma}   


\setlength{\textheight}{234 mm}
\setlength{\textwidth} {150 mm}
\setlength{\topmargin} {-25 mm}
\oddsidemargin  .1in
\evensidemargin -.25in

\begin{document}
\centerline{ \bf\large{ Tax systems for sustainable economic development.}}

\vskip 5mm

{\bf \centerline {\Large N.S. Gonchar \footnote{ This work is partially supported by the Fundamental Research Program of the Department of Physics and Astronomy of the National Academy of Sciences of Ukraine "Building and researching financial market models using the methods of nonlinear statistical physics and the physics of nonlinear phenomena N 0123U100362." }} }

\vskip 5mm
\centerline{\bf { Bogolyubov Institute for Theoretical Physics }} 
\centerline{\bf {of the National Academy of Sciences of Ukraine.}}
\vskip 2mm

 \begin{abstract}
 A complete description of taxation systems that ensure sustainable economic development is given. These tax systems depend on production technologies and gross output volumes. Explicit formulas for such dependencies are found.
  In a sustainable economy, the value added either exceeds or is strictly less than the value of the product produced. The latter is determined by the tax system.  
 
The concept of perfect taxation systems is introduced and their explicit form is found.
  For perfect taxation systems, it is proved that the vector of gross output should belong to the interior of the cone formed by the vectors of the columns of the total cost matrix.
 It is shown that under perfect taxation systems the vector of gross output must satisfy a certain system of linear homogeneous equations.
  It is shown, that under certain conditions there are tax systems under which certain industries require subsidies for their existence.  Under such taxation systems, the industries that require subsidies are identified.
 
 The family of all non negative solutions of the system of linear equations and inequalities is constructed, which allowed us to formulate a criterion for describing all equilibrium states in which partial clearing of  markets occurs.
\end{abstract}

\centerline{{\bf Keywords:} Technological mapping; Economic balance; }
\centerline{ Clearing markets; Vector of taxation;}
 \centerline{Sustainable economic development.}
  \centerline{Aggregated economy description.}
 \section{Introduction.}

 Sustainable development is defined as the production of goods and services that leads to the complete clearing of markets in a given period of economic functioning under the influence of the market mechanism of pricing goods and services.
 The market mechanism for establishing the prices of goods and services is understood as the equilibrium of supply and demand for resources and produced goods in the production process, taking into account the tax system. 
 For taxation systems that ensure sustainable economic development, the nonlinear system of equations with respect to the equilibrium price vector is transformed into a linear system of equations with a strictly positive solution, which always exists under minor assumptions from the economic point of view.

 The main problem of the work was to find out whether such a market mechanism exists for the main production models. The study of this problem was started in the papers \cite{11Gonchar11},  \cite{12Gonchar12},  \cite{10Gonchar2},  \cite{11Gonchar2}. 
 From a mathematical point of view, this is a question of whether there is an equilibrium price vector and taxation system under which there is equality of supply and demand, determined by the structure of production technologies, production volumes and demand for goods and services produced in the production process by consumers.
 This is the main problem solved in the first part of this paper.  It describes all taxation systems that ensure sustainable economic development in the input-output production model.
 
 For such taxation systems, an equilibrium price vector is constructed at which demand equals supply. This equilibrium vector depends on a taxation system that ensures sustainable development. The existence of such an equilibrium price vector is a confirmation of the fact that a market mechanism for pricing goods and services exists, which is important from an economic point of view.
 
  Among the taxation systems that ensure sustainable development, perfect taxation systems are distinguished. They are characterized by the fact that the gross value added in an industry is equal to the value of the product created in the same industry.

 We describe taxation systems under which the economic system is able to function in a  mode with subsidies. That is, there exists a market pricing mechanism in which there is a strictly positive equilibrium price vector such that under this taxation system certain industries require subsidies to exist.
 
 In real economic systems, the taxation system may not coincide with the taxation system that ensures  sustainable development. In this case, there is no complete clearing of the markets for an equilibrium price vector.
 The second part of the work is devoted to the description of all equilibrium states, i.e., those states in which only a partial clearing of the markets takes place.
Such states of equilibrium describe the possible overproduction of goods and services in the economic system.

The work gives a complete description of the equilibrium states under which only partial clearing of the markets takes place. To do this, the problem of a complete description of non negative solutions of a linear system of equations and inequalities is first solved.
 
 Then the Theorem on the necessary and sufficient conditions that each solution of a linear system of equations and inequalities corresponds to an equilibrium price vector is established.
 The method of constructing such an equilibrium state under which overproduction in the economic system will be the smallest is indicated.
 
 It reduces to constructing a solution to a linear system of equations and inequalities, which is a solution to a certain problem of quadratic programming.
 
 Clarifying the conditions of sustainable development of the economy is extremely important to avoid undesirable development scenarios.  The definition of what we mean by sustainable development at  the micro-economic level is presented in a number of works \cite{11Gonchar11},  \cite{12Gonchar12},  \cite{10Gonchar2},  \cite{11Gonchar2}. 
At the macroeconomic level, this problem was formulated in the articles \cite{6Gonchar} - \cite{8Gonchar}. 
 In  Definition 1, we determine the tax systems that ensure sustainable development under an equilibrium price vector determined by the equality of supply and demand for production resources and produced goods.

 Theorem \ref{Nato1} is an auxiliary statement that provides sufficient conditions for the existence of a solution with respect to the price vector and which has proved to be extremely important from the point of view of the concept of sustainable economic development.
 
 Theorem \ref{Kur3} provides the necessary and sufficient  conditions under which a tax system ensures  sustainable development of the economy.
 In fact, it provides a description of tax systems under which sustainable economic development  takes place.
 
 Theorem \ref{PTPPP4} formulates necessary and sufficient conditions for taxation systems under which sustainable development of the economy takes place. It states that there are no taxation systems other than those described   that ensure  sustainable economic development.

Theorem \ref{1KissP1} establishes the conditions for the gross output vector under which the economic system is able to function in the mode of sustainable development.

 Theorem \ref{KissP1} states that under the taxation system given by the formula (\ref{Kur4}), the set of all industries is divided into two non-intersecting sets in which the added value either does not exceed the value of the  product produced or strictly exceeds it.
 
  In the Definition \ref{kissTk1} we determine the industries of the economic system that require subsidies under a competitive equilibrium price vector.
 
 Theorem \ref{PPT1} gives sufficient conditions for tax systems  under which certain industries require subsidies.
 
Theorem \ref{nessuf1}  establishes the existence of  perfect tax systems.

 Theorem \ref{Nato8}  gives the sufficient conditions under which, with a perfect taxation system, the economic system is able to function in the mode of sustainable development.
 
More specifically,  provided that the vector of gross outputs is a solution of the system of equations in the "input-output" production model with a positive right-hand side, there is always a perfect tax system given by an explicit formula under which the economic system is able to function in the mode of sustainable development. Moreover, the gross value added  generated  in each industry is strictly positive. It is shown that in the mode of sustainable development under the established taxation system, the gross value added  created is equal to the value of the product created in this industry. 

 In real economic systems, not all branches of production have a strictly positive gross value added.  There may be several reasons for this: obsolete production technologies, significant imports of consumer goods, raw materials, etc.

 Not all tax systems ensure the sustainable development of the economy. The equilibrium price vector generated by the tax system and production technologies and the demand for resources may lead to the negative value added  in certain industries.

 Theorem \ref{10Nato8} establishes the existence of an equilibrium price vector in  this case and gives expressions for the amount of subsidies in those industries in which values added created   are negative.

 As a result of the study of the conditions of sustainable development of the economy it was established that the taxation system in the mode of sustainable development depends on production technologies and volumes of gross outputs. It is the correct choice of the tax system that will create a competitive equilibrium price vector that will ensure sustainable development.
 
 In real economic systems, taxation consists of direct taxes on production and indirect taxes on consumers, which also affect production.
  As a result, the final taxation may, as a rule, differ from the taxation of the mode  sustainable development  in which markets are completely cleared.
 
 Due to the taxation that has developed in the real economic system, there will be a partial clearing of the markets and therefore a part of the produced goods will not find their consumers.
 
Section 3 examines exactly this case.
Theorem \ref{TVYA} describes all non negative solutions of the system of linear equations and inequalities, which is important for describing all equilibrium states with excess supply.
 
 Proposition \ref{g1} proves that the minimum distance to the vector of the right hand side of the linear system of equations and inequalities is reached on the set of all non-negative solutions of the linear system of equations and inequalities.
 This minimum is global. The latter allows to find this minimum by solving some quadratic programming problem.

 Theorems \ref{TtsyVtsja5} and \ref{2TtsyVtsja4} give a sufficient condition for the existence of a solution of the system of equations with respect to the price vector, provided that the right-hand side of this system of equations belongs to the cone formed by the column vectors of the non-negative matrix.

 Based on the theorems \ref{TtsyVtsja5} and \ref{2TtsyVtsja4}, the definition \ref{kolja2} is given, in which each non-negative solution of the linear system of equations and inequalities corresponds to an equilibrium price vector.
  Theorem \ref{myktina19} is the basis for the description of all equilibrium states in which partial clearing of the markets takes place.
 
 Theorem \ref{MykHon1} gives the necessary and sufficient conditions for the existence of an equilibrium price vector under which partial clearing of the markets takes place. Namely, every non-negative solution of the system of linear inequalities and equations corresponds to the equilibrium price vector.
 
 Definition \ref{myktinavitka1} describes the equilibrium state that has the least excess supply. To find it, one should solve a certain problem of quadratic programming to find the solution of a linear system of equations and inequalities, and based on this solution, construct the real consumption vector, then find the equilibrium price vector and calculate the level of excess supply.

 \section{Description of taxation systems.}
In this section, we describe taxation systems in the economy model described by "input - output" production technologies.
 In the model of  economy, described by "input - output" technology, the   matrix 
 $A=||a_{ki}||_{k,i=1}^n$ is supposed a non negative, productive and  indecomposable one.
 Further, we assume that the gross output vector $x=\{x_k\}_{k=1}^n$ satisfies the system of equations
  \begin{eqnarray}\label{Kur7} 
  x_k=\sum\limits_{i=1}^na_{ki} x_i +c_k+e_k-i_k,\quad k=\overline{1,n},
  \end{eqnarray}  
where $c=\{c_k\}_{k=1}^n,$ $e=\{e_k\}_{k=1}^n,$ $i=\{i_k\}_{k=1}^n$ are the vectors of internal consumption, export and import, correspondingly.
We also  assume that equilibrium price vector $p=\{p_k\}_{k=1}^n$ satisfies the set of equations 
  \begin{eqnarray}\label{Kur8} 
  p_k=\sum\limits_{i=1}^n a_{sk} p_s +\delta_k,\quad k=\overline{1,n}.
  \end{eqnarray}   
Under these conditions,   the solutions of the set of equations (\ref{Kur7}) and  (\ref{Kur8})  always exist. The non negative vector 
 $\delta=\{\delta_k\}_{k=1}^n$ we call the vector of added values.
In the market economy, the values of added values $\delta_k, \ k=\overline{1,n},$ are unknown. 

Prices for goods and services are formed under the influence of market forces of supply and demand. This is the essence of the market economy.  The pricing system is also affected by the taxation system. How do the created  added values depend on the type of taxation system?

Is there a principle of formation of prices for goods and services that takes into account the taxation system and market pricing mechanisms?

The basis of this principle are formulated below.\\
1) In each period of the economy functioning, there are resources to ensure production with given technologies.\\
2) Aggregate demand for resources and produced goods, which is determined by production technologies, must be equal to the aggregate supply of resources and produced goods.\\

We  call these two principles the principles of sustainable economic development.

Mathematically, they was formulated in papers
\cite{11Gonchar11},  \cite{12Gonchar12},  \cite{10Gonchar2},  \cite{11Gonchar2}. 
This formulation became possible thanks to a new approach to the description of the economic systems presented in the monograph \cite{Gonchar2}

We summarize it in the following  Definition \ref{PTPPP1} (see also \cite{11Gonchar11},  \cite{12Gonchar12},  \cite{10Gonchar2},  \cite{11Gonchar2}).
\begin{de}\label{PTPPP1}
The economic system functions in the mode of sustainable development under the taxation system
$\pi=\{\pi_i\}_{i=1}^n$ if there exists a strictly positive solution relative to the vector $p_0=\{p_i^0\}_{i=1}^n$ of the system of equations
\begin{eqnarray}\label{PTPPP2}
  \sum\limits_{i=1}^n a_{ki}\frac{(1-\pi_i)x_i p_i^0}{ \sum\limits_{s=1}^na_{si}p_s^0 }=(1-\pi_k) x_k, \quad  i=\overline{1,n},
\end{eqnarray}
 satisfying conditions
 \begin{eqnarray}\label{PTPPP3}
p_k^0-\sum\limits_{s=1}^np_s^0 a_{s k}>0,  \quad k=\overline{1,n},
\end{eqnarray}  
where $A=||a_{ki}||_{k,i=1}^n$ is a non-negative non-decomposable  productive matrix, $x=\{x_k\}_{k=1}^n$ is a strictly positive solution to the set of equations (\ref{Kur7}).
 \end{de}
 
 The following Definition \ref{Kur1} is important for describing taxation systems that ensure  sustainable development of economy.
\begin{de}\label{Kur1}
The taxation system $\pi=\{\pi_k\}_{k=1}^n, \  0<\pi_k<1, \ k=\overline{1,n},$ ensures the sustainable development of the economy under a strictly positive gross output vector $x=\{x_k\}_{k=1}^n$, if there exists a strictly positive equilibrium price vector
$p_0=\{p_k^0\}_{k=1}^n$,
solving the system of equations
\begin{eqnarray}\label{Kur2}
  \sum\limits_{i=1}^n a_{ki}\frac{(1-\pi_i)x_i p_i^0}{ \sum\limits_{s=1}^na_{si}p_s^0 }=(1-\pi_k) x_k, \quad  i=\overline{1,n},
\end{eqnarray}  
and such that 
\begin{eqnarray}\label{111Kur1}
p_k^0-\sum\limits_{s=1}^np_s^0 a_{s k}>0,  \quad k=\overline{1,n},
\end{eqnarray}  
where $A=||a_{ki}||_{k,i=1}^n$ is a non-decomposable non-negative productive matrix.
\end{de}

The next Theorem \ref{Nato1} is an auxiliary result which will help to describe taxation systems that ensure sustainable development of the economy.
\begin{te}\label{Nato1}
Let $z=\{z_i\}_{i=1}^n$ be a strictly positive vector from $R_+^n$ and let  $A=||a_{ki}||_{k,i=1}^n$ be a non-negative  non-decomposable matrix. The set of equations
\begin{eqnarray}\label{Nato2}
\frac{z_k}{\sum\limits_{i=1}^n a_{ki}z_i}=\frac{p_k}{\sum\limits_{s=1}^n a_{sk}p_s}, \quad k=\overline{1,n},
\end{eqnarray}
has a strictly positive solution $p=\{p_i\}_{i=1}^n \in R_+^n.$ 
\end{te}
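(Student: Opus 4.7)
The plan is to reformulate (\ref{Nato2}) as a linear eigenvalue problem and invoke the Perron--Frobenius theorem. Set $m_k=\sum_{i=1}^n a_{ki}z_i$. Because $A$ is indecomposable, no row of $A$ can be identically zero (otherwise $\{k\}$ would be an isolated index set and $A$ would decompose), so for $z\gg 0$ one has $m_k>0$ for every $k$. Clearing denominators in (\ref{Nato2}) turns the system into
\begin{equation*}
p_k\, m_k \;=\; z_k\sum_{s=1}^n a_{sk}p_s,\qquad k=\overline{1,n},
\end{equation*}
which is exactly the eigenvalue equation $Bp=p$ for the non-negative matrix $B$ with entries $B_{ks}=(z_k/m_k)\,a_{sk}$.

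Next I would exhibit a strictly positive left eigenvector of $B$ with eigenvalue one. A direct computation gives
\begin{equation*}
(m^{T}B)_s \;=\; \sum_{k=1}^n m_k\,\frac{z_k}{m_k}\,a_{sk} \;=\; \sum_{k=1}^n z_k\,a_{sk} \;=\; m_s,
\end{equation*}
so $m^{T}B=m^{T}$, and the vector $m=\{m_k\}_{k=1}^n$ is strictly positive.

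Finally, the matrix $B$ shares its zero pattern with $A^{T}$ (since every coefficient $z_k/m_k$ is strictly positive), hence $B$ is indecomposable. By the Perron--Frobenius theorem applied to $B$, the spectral radius $\rho(B)$ is a simple eigenvalue and is the only eigenvalue of $B$ admitting a strictly positive eigenvector, either on the right or on the left. Since $m\gg 0$ satisfies $m^{T}B=m^{T}$, we conclude $\rho(B)=1$, and the right Perron eigenvector $p\gg 0$ of $B$ then solves $Bp=p$, equivalently (\ref{Nato2}). The main obstacle in the argument is \emph{spotting} the left eigenvector $m$; the symmetry of (\ref{Nato2}) under the formal exchange $p\leftrightarrow z$, $A\leftrightarrow A^{T}$ makes the choice $m_k=\sum_i a_{ki}z_i$ natural, after which everything reduces to a standard invocation of Perron--Frobenius.
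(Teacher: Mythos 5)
Your proof is correct, but it reaches the conclusion by a different route than the paper. The paper applies Brouwer's fixed point theorem to the normalized map $p_k\mapsto\bigl(p_k+y_k\sum_s a_{sk}p_s\bigr)/\bigl(1+\sum_k y_k\sum_s a_{sk}p_s\bigr)$ on the simplex, extracts from the fixed point an eigenvector relation $V^Tp_0=\lambda p_0$ with $V=||a_{ij}y_j||$ (note that $V^T$ is exactly your matrix $B$), uses indecomposability of $V$ to show $\lambda>0$ and $p_0\gg 0$ via positivity of $[V^T]^{n-1}p_0$, and finally proves $\lambda=1$ by multiplying the identity $z_k/y_k=\sum_i a_{ki}z_i$ by $p_k^0$ and summing --- which is precisely the inner product of the eigenvector equation with your left eigenvector $m$, used implicitly. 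You instead invoke Perron--Frobenius directly: having exhibited the strictly positive left eigenvector $m$ with $m^TB=m^T$, irreducibility of $B$ forces $\rho(B)=1$, and the right Perron vector is the desired strictly positive solution. The two arguments share the same underlying algebra (the matrix $B=V^T$ and the identity $m^TB=m^T$), but yours replaces the Brouwer step and the subsequent positivity and $\lambda=1$ bookkeeping with a single appeal to the Perron--Frobenius theorem, which makes the proof shorter and also yields uniqueness of $p$ up to a positive scalar for free (the paper states this separately as Consequence \ref{TinTinPPE1}); the paper's fixed-point route is more self-contained if one does not wish to cite the full strength of Perron--Frobenius. One small point worth making explicit in your write-up: indecomposability guarantees $A$ has no zero columns as well as no zero rows, so $\sum_s a_{sk}p_s>0$ for $p\gg 0$ and the passage from $Bp=p$ back to the original equation (\ref{Nato2}) with its denominators is legitimate.
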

\begin{proof}
Let us consider the nonlinear set of equations
\begin{eqnarray}\label{Nato3}
\frac{p_k+y_k \sum\limits_{s=1}^n a_{sk}p_s}{1+\sum\limits_{k=1}^n y_k \sum\limits_{s=1}^n a_{sk}p_s}=p_k, \quad k=\overline{1,n},
\end{eqnarray}
 where we denoted by
 $$y_k=\frac{z_k}{\sum\limits_{i=1}^n a_{ki}z_i}, \quad k=\overline{1,n}. $$
\end{proof}
This set of equations has a solution in the set  $P=\{p=\{p_i\}_{i=1}^n \in R_+^n, \  \sum\limits_{i=1}^n p_i=1\},$ since the left part of this set of equations is a map that maps the set $P$ into itself and is continuous on it. Really, due to Brouwer Theorem  \cite{Nirenberg} there exists a solution $p_0=\{p_i^0\}_{i=1}^n$ of the set of equations (\ref{Nato3}) in the set $P.$
From the set of equations (\ref{Nato3}) it follows that $p_0=\{p_i^0\}_{i=1}^n$
is a solution of the set of equations
\begin{eqnarray}\label{Nato4}
y_k \sum\limits_{s=1}^n a_{sk}p_s^0=\lambda p_k^0,\quad k=\overline{1,n},
\end{eqnarray}
where $\lambda=\sum\limits_{k=1}^n y_k \sum\limits_{s=1}^n a_{sk}p_s^0.$
We  prove that $\lambda>0$ and the solution $p_0=\{p_i^0\}_{i=1}^n$ of the system of equations (\ref{Nato4})  is strictly positive due to the indecomposability of the matrix $A$.
Indeed, the vector $p_0=\{p_i^0\}_{i=1}^n$ satisfies the system of equations
(\ref{Nato4}), which can be written in operator form
\begin{eqnarray}\label{phuph3}
V^T p_0=\lambda p_0,
\end{eqnarray}
or
\begin{eqnarray}\label{phuph3}
[V^T]^{n-1} p_0=\lambda^{n-1} p_0,
\end{eqnarray}
where we denoted the matrix $V=||v_{ij} ||_{ij=1}^n,$  $v_{ij}=a_{ij}y_j, \ i, j=\overline{1,n}.$
Due to the fact that the vector $p_0$ belongs to the set $P$ and the matrix $V$ is non-negative and indecomposable, the vector $[V^T]^{n-1} p_0$ is strictly positive. It follows that $\lambda>0$ and the vector $p_0$ is strictly positive.

 Let us prove that 
 $\lambda=1.$ Multiplying  by $ p_k^0 $  the left and right hand sides of the equality
$$ \frac{z_k}{y_k}=\sum\limits_{i=1}^n a_{ki}z_i   $$
 and summing up over $k$ from $1$ to $n$ we obtain
 $$  \sum\limits_{k=1}^n\frac{p_k^0z_k}{y_k}=\sum\limits_{i=1}^n \sum\limits_{k=1}^n a_{ki}p_k^0 z_i =\lambda \sum\limits_{i=1}^n\frac{p_i^0z_i}{y_i}.    $$
 Since  $\sum\limits_{k=1}^n\frac{p_k^0z_k}{y_k}>0$ we have $ \lambda =1.$
 Theorem \ref{Nato1} is proved.
 
 \begin{ce}\label{TinTinPPE1}
 The solution of the set of equations (\ref{Nato2}) constructed in Theorem \ref{Nato1} is determined uniquely with accuracy up to a positive constant.
  \end{ce}
 
 \begin{te}\label{Kur3} 
 Let $x=\{x_i\}_{i=1}^n$ be a strictly positive gross output vector from $R_+^n$ and let  $A=||a_{ki}||_{k,i=1}^n$ be a non-negative  non-decomposable productive matrix.
Necessary and sufficient conditions
 for the taxation system $\pi=\{\pi_k\}_{k=1}^n,\  0<\pi_k<1, \ k=\overline{1,n},$ to ensure sustainable development of the economy is  the  following  representation of it
 \begin{eqnarray}\label{Kur4}
 \pi_i=1-b \frac{\sum\limits_{j=1}^n a_{ij}z_j}{x_i}, \quad 0<b <\min\limits_{1\leq i\leq n} \frac{x_i}{\sum\limits_{j=1}^n a_{ij}z_j},\quad  i=\overline{1,n},
 \end{eqnarray}
for a certain strictly positive vector $z=\{z_i\}_{i=1}^n \in R_+^n, $ satisfying conditions
 \begin{eqnarray}\label{111Kur2}
 \frac{z_i}{\sum\limits_{j=1}^n a_{ij}z_j}>1, \quad 
 i=\overline{1,n}.
 \end{eqnarray}
 \end{te}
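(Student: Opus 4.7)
The plan is to split the argument into sufficiency and necessity and to use Theorem \ref{Nato1} as the bridge between the vector $z$ in the formula and the equilibrium price vector $p_0$.

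For the sufficiency direction, I would start from a strictly positive $z$ satisfying \eqref{111Kur2} and define $\pi$ by \eqref{Kur4}. Apply Theorem \ref{Nato1} to this $z$ to obtain a strictly positive $p_0$ with
$$\frac{z_k}{\sum_{i=1}^n a_{ki}z_i}=\frac{p_k^0}{\sum_{s=1}^n a_{sk}p_s^0},\qquad k=\overline{1,n}.$$
The key observation is that \eqref{Kur4} gives $(1-\pi_i)x_i = b\sum_j a_{ij}z_j$, so equation \eqref{Kur2} becomes, after pulling out $b$,
$$\sum_{i=1}^{n}a_{ki}\,\frac{\bigl(\sum_{j}a_{ij}z_j\bigr)p_i^0}{\sum_{s}a_{si}p_s^0}=\sum_{j}a_{kj}z_j.$$
Using Theorem \ref{Nato1} in the equivalent form $(\sum_j a_{ij}z_j)/(\sum_s a_{si}p_s^0)=z_i/p_i^0$, the left hand side collapses to $\sum_i a_{ki}z_i$, matching the right hand side. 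Condition \eqref{111Kur1} then follows immediately from Theorem \ref{Nato1}: the ratio $p_k^0/\sum_s a_{sk}p_s^0$ equals $z_k/\sum_i a_{ki}z_i$, which exceeds $1$ by hypothesis \eqref{111Kur2}. Finally, the bound on $b$ in \eqref{Kur4} guarantees $0<\pi_i<1$, and the lower bound $\pi_i<1$ is automatic.

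For the necessity direction, suppose $\pi$ ensures sustainable development with equilibrium price $p_0$. I would introduce the auxiliary vector
$$w_i=\frac{(1-\pi_i)x_i p_i^0}{\sum_{s=1}^n a_{si}p_s^0},\qquad i=\overline{1,n},$$
which is strictly positive. Then \eqref{Kur2} reads exactly $\sum_i a_{ki}w_i=(1-\pi_k)x_k$, so
$$1-\pi_k=\frac{\sum_j a_{kj}w_j}{x_k}.$$
Rescaling, set $z=w/b$ for any $b\in(0,1)$ to be chosen; this produces $\pi_k=1-b\sum_j a_{kj}z_j/x_k$, i.e.\ the form \eqref{Kur4}. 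The bound $b<\min_i x_i/\sum_j a_{ij}z_j$ is equivalent to $\pi_i>0$, which is part of the hypothesis, so one simply takes $b$ in the admissible range.

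What remains is to check \eqref{111Kur2} for this $z$. A direct computation using the defining formula for $w_i$ and the equation $\sum_j a_{ij}w_j=(1-\pi_i)x_i$ gives
$$\frac{z_i}{\sum_j a_{ij}z_j}=\frac{w_i}{\sum_j a_{ij}w_j}=\frac{p_i^0}{\sum_s a_{si}p_s^0},$$
and the right hand side exceeds $1$ precisely by condition \eqref{111Kur1} of Definition \ref{Kur1}. This closes the loop. I do not foresee a serious obstacle: the only nontrivial point is recognizing that the substitution $w_i=(1-\pi_i)x_ip_i^0/\sum_s a_{si}p_s^0$ linearizes \eqref{Kur2} and simultaneously implements the identity furnished by Theorem \ref{Nato1}; once this substitution is in hand, both directions reduce to bookkeeping.
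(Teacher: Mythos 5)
Your proposal is correct and follows essentially the same route as the paper: sufficiency by substituting the representation \eqref{Kur4} into \eqref{Kur2} and invoking Theorem \ref{Nato1} to produce $p_0$ satisfying $p_i^0/\sum_s a_{si}p_s^0=z_i/\sum_j a_{ij}z_j$, and necessity by introducing the auxiliary vector $w_i=(1-\pi_i)x_ip_i^0/\sum_s a_{si}p_s^0$ (the paper's $z_i^0$), which linearizes \eqref{Kur2} and yields both the representation and the ratio identity giving \eqref{111Kur2}. Your treatment of the admissible range of $b$ is a point the paper glosses over, but the substance is identical.
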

 \begin{proof}
 Necessity. Let a  taxation system $\pi=\{\pi_k\}_{k=1}^n,\  0<\pi_k<1, \ k=\overline{1,n},$  ensure sustainable development of the economy. Then the equalities (\ref{Kur2}) are true. From them it follows that 
  \begin{eqnarray}\label{1Kur4}
 (1-\pi_k)x_k=\sum\limits_{i=1}^n a_{ki}z_i^0,\quad  k=\overline{1,n},
 \end{eqnarray}
  where
 \begin{eqnarray}\label{2Kur4}
   z_i^0=\frac{(1-\pi_i)x_i p_i^0} {\sum\limits_{s=1}^n a_{si}p_s^0}>0,\quad  i=\overline{1,n},   
 \end{eqnarray}   
$ p_0=\{p_i^0\}_{i=1}^n $ is a strictly positive solution to the set of equations (\ref{Kur2}).
Substituting (\ref{1Kur4}) into (\ref{2Kur4}) we obtain that $ p_0=\{p_i^0\}_{i=1}^n $ solves the set of equations
 \begin{eqnarray}\label{3Kur4}
\frac{p_k^0}{\sum\limits_{s=1}^n a_{sk}p_s^0}=\frac{z_k^0}{\sum\limits_{i=1}^n a_{ki}z_i^0}, \quad k=\overline{1,n}.
 \end{eqnarray} 
 If to substitute instead of
 $$\frac{p_k^0}{\sum\limits_{s=1}^n a_{sk}p_s^0}, \quad  k=\overline{1,n}, $$ 
 the right side of equality (\ref{3Kur4})
 into (\ref{Kur2}) we get the set of equations relative to the taxation system $\pi=\{\pi_k\}_{k=1}^n,\  0<\pi_k<1, \ k=\overline{1,n},$ 
 \begin{eqnarray}\label{4Kur4}
  \sum\limits_{i=1}^n a_{ki}\frac{(1-\pi_i)x_i z_i^0}{ \sum\limits_{s=1}^n a_{ij}z_j^0 }=(1-\pi_k) x_k, \quad  i=\overline{1,n},
\end{eqnarray} 
The solution of the set of equations (\ref {4Kur4}) is 
 \begin{eqnarray}\label{5Kur4}
 \pi_k=1- b \frac{\sum\limits_{i=1}^n a_{ki}z_i^0}{x_k},\quad  \quad 0<b <\min\limits_{1\leq k\leq n} \frac{x_k}{\sum\limits_{j=1}^n a_{kj}z_j^0},\quad  k=\overline{1,n}.
 \end{eqnarray} 
 Due to the fact that equilibrium price vector
$p_0=\{p_i^0\}_{i=1}^n$  also satisfies to the set of equations (\ref{3Kur4}) it follows that the inequalities (\ref{111Kur2}) are true, since the equilibrium price vector $p_0=\{p_i^0\}_{i=1}^n$ satisfies the inequalities (\ref{111Kur1}).

 Sufficiency. Suppose that the taxation system is given by the formula (\ref{Kur4}) for which the inequalities (\ref{111Kur2}) are true. Let us prove the existence of strictly positive solution $p_0=\{p_i^0\}_{i=1}^n$  of the set of equations (\ref{Kur2}) satisfying the inequalities  (\ref{111Kur1}). Substituting (\ref{Kur4}) into (\ref{Kur2}),  we obtain the set of equations relative the equilibrium price vector $p_0=\{p_i^0\}_{i=1}^n$
\begin{eqnarray}\label{Kur5}
  \sum\limits_{i=1}^n a_{ki}\frac{b\sum\limits_{j=1}^n a_{ij}z_j p_i^0}{ \sum\limits_{s=1}^na_{si}p_s^0 }=b\sum\limits_{j=1}^n a_{kj}z_j, \quad  i=\overline{1,n}.
\end{eqnarray}  
If to choose the vector $p_0=\{p_i^0\}_{i=1}^n$ so that it would satisfy the system of equations
 \begin{eqnarray}\label{Kur6}
 \frac{p_i^0}{\sum\limits_{s=1}^n a_{si}p_s^0}=\frac{z_i}{\sum\limits_{j=1}^n a_{ij}z_j},\quad  i=\overline{1,n},
 \end{eqnarray}  
 then it will satisfy the set of equations (\ref{Kur2}). But the strictly positive  solution to the set of equations (\ref{Kur6}) always exists, due to Theorem \ref{Nato1}. Due to the inequalities (\ref{111Kur2}), the inequalities  (\ref{111Kur1}) are true.
 \end{proof}
 
  Next Theorem 
 \ref{PTPPP4} states that there are no taxation systems other than those described above that ensure sustainable economic development.

  \begin{te}\label{PTPPP4}
   Let $x=\{x_i\}_{i=1}^n$ be a strictly positive gross output vector from $R_+^n$ and let  $A=||a_{ki}||_{k,i=1}^n$ be a non-negative  non-decomposable productive matrix.
   An economy functions in the mode of sustainable  development if and only if  for the taxation system  $\pi=\{\pi_k\}_{k=1}^n,\  0<\pi_k<1, \ k=\overline{1,n},$ the representation  (\ref{Kur4}) is true, which satisfies the conditions  (\ref{111Kur2}).
\end{te}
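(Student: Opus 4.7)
The plan is to observe that Definition \ref{PTPPP1} and Definition \ref{Kur1} impose literally the same mathematical condition on the taxation vector $\pi$: both require the existence of a strictly positive vector $p_0\in R_+^n$ solving the system (\ref{PTPPP2}) $\equiv$ (\ref{Kur2}) and satisfying the strict inequalities (\ref{PTPPP3}) $\equiv$ (\ref{111Kur1}). Consequently the phrase ``the economy functions in the mode of sustainable development under the taxation system $\pi$'' is just a rewording of ``$\pi$ ensures sustainable development of the economy'', and Theorem \ref{PTPPP4} is a direct reformulation of the already established Theorem \ref{Kur3}.

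With this identification in hand, I would split the proof into the two standard implications. For the sufficiency (``if'') direction, I would assume that $\pi$ is given by (\ref{Kur4}) for some strictly positive $z$ satisfying (\ref{111Kur2}); applying the sufficiency part of Theorem \ref{Kur3} yields a strictly positive $p_0$ obeying (\ref{Kur6}), hence solving (\ref{Kur2}), and (\ref{Kur6}) together with (\ref{111Kur2}) forces (\ref{111Kur1}). By Definition \ref{PTPPP1} the economy then functions in the sustainable development mode. For the necessity (``only if'') direction, I would assume the economy functions in the mode of sustainable development, pick the corresponding strictly positive $p_0$, and invoke the necessity part of Theorem \ref{Kur3}: setting $z_i^0$ by (\ref{2Kur4}) produces a strictly positive $z^0$, the identity (\ref{3Kur4}) holds, (\ref{111Kur1}) together with (\ref{3Kur4}) yields (\ref{111Kur2}), and the explicit formula (\ref{5Kur4}) is exactly (\ref{Kur4}).

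The substantive content of the theorem, as emphasized in the text preceding it, is the exhaustiveness claim: no taxation system outside the parametric family (\ref{Kur4}) can sustain development. This is not a new fact but precisely the necessity direction of Theorem \ref{Kur3}, so no additional argument is required. I do not anticipate any genuine obstacle; the only step that is not purely bookkeeping is the verification, inside Theorem \ref{Kur3}, that the auxiliary system (\ref{Kur6}) admits a strictly positive solution, and this is delivered by Theorem \ref{Nato1}. The proof therefore amounts to quoting Theorem \ref{Kur3} after recording the equivalence of Definitions \ref{PTPPP1} and \ref{Kur1}.
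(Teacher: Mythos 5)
Your proof is correct, and for the sufficiency half it coincides with the paper's: the paper likewise writes ``as in the proof of the sufficiency of Theorem \ref{Kur3}'' and falls back on Theorem \ref{Nato1} to produce a strictly positive solution of (\ref{PTPPP9}), from which (\ref{PTPPP3}) follows via (\ref{111Kur2}). Where you genuinely diverge is the necessity half. You dispose of it by observing that Definition \ref{PTPPP1} and Definition \ref{Kur1} impose the same condition on the pair $(\pi,p_0)$ --- the systems (\ref{PTPPP2}) and (\ref{Kur2}) and the inequalities (\ref{PTPPP3}) and (\ref{111Kur1}) are identical --- and then citing the necessity part of Theorem \ref{Kur3}, where $z^0$ defined by (\ref{2Kur4}) together with (\ref{3Kur4}) delivers (\ref{Kur4}) and (\ref{111Kur2}). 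The paper instead re-derives the representation from scratch: it passes to the value variables $X_i=x_ip_i^0$ and $\bar a_{ij}=p_i^0a_{ij}/p_j^0$, shows that $Y_i=(1-\pi_i)X_i/\sum_{s}\bar a_{si}$ solves the homogeneous system (\ref{PTPPP6}), invokes Lemma 3 of \cite{11Gonchar11} for the existence of a positive solution unique up to the factor $c_0$, and reads off $z_i=Y_i/p_i^0$ and the formula (\ref{Kur4}). Your route is shorter and logically sufficient; the paper's buys a constructive identification of $z$ in terms of the equilibrium data and exhibits the homogeneous system (\ref{PTPPP6}) that is reused later (Consequence \ref{me1}). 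One small caveat: the two definitions are not quite ``literally'' the same, since Definition \ref{PTPPP1} additionally requires $x$ to be a strictly positive solution of (\ref{Kur7}) whereas Definition \ref{Kur1} asks only that $x$ be strictly positive; this does not affect the equivalence of the existence statements about $p_0$, but your ``if'' direction formally needs the standing assumption from the beginning of Section 2 that $x$ satisfies (\ref{Kur7}) in order to land in Definition \ref{PTPPP1}, and that should be said explicitly.
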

\begin{proof}
Necessity.
Let an economy functions in the mode of sustainable development. Then, due to Definition \ref{PTPPP1}, there exist a taxation system $\pi=\{\pi_k\}_{k=1}^n,\  0<\pi_k<1, \ k=\overline{1,n},$   a strictly positive equilibrium price vector
  $ p_0=\{p_i^0\}_{i=1}^n,$
solving the set of equations (\ref{PTPPP2}), and  satisfying the conditions (\ref{PTPPP3}). Let us prove that for the taxation system
 $\pi=\{\pi_k\}_{k=1}^n,\  0<\pi_k<1, \ k=\overline{1,n},$  
the representation  (\ref{Kur4}) is valid, and it satisfies the conditions
  (\ref{111Kur2}).  Introduce the denotations  $X_i=x_i p_i^0,\ i=\overline{1,n},$
  $\bar a_{ij}=\frac{p_i^0 a_{ij}}{p_j^0},\ i,j=\overline{1,n}.$  Relative to the vector $\pi=\{\pi_k\}_{k=1}^n,\  0<\pi_k<1, \ k=\overline{1,n},$ we obtain the set of equations
\begin{eqnarray}\label{1PTPPP4}
  \sum\limits_{i=1}^n \bar a_{ki}\frac{(1-\pi_i)X_i }{ \sum\limits_{s=1}^n \bar a_{si} }=(1-\pi_k) X_k, \quad  i=\overline{1,n}.
\end{eqnarray} 
Let us denote
\begin{eqnarray}\label{PTPPP5}
Y=\{Y_i\}_{i=1}^n, \quad
\frac{(1-\pi_i)X_i }{ \sum\limits_{s=1}^n \bar a_{si} }=Y_i, \quad  i=\overline{1,n}.
\end{eqnarray} 
Then the vector $Y=\{Y_i\}_{i=1}^n$ solves the set of equations
  \begin{eqnarray}\label{PTPPP6}
  \sum\limits_{i=1}^n \bar a_{ki}Y_i=\sum\limits_{s=1}^n \bar a_{sk} Y_k, \quad  i=\overline{1,n}.
\end{eqnarray}
Due to Lemma 3 \cite{11Gonchar11}, there exists the strictly positive solution to the set of equations (\ref{PTPPP6}). We choose it such that $\sum\limits_{i=1}^n Y_i=1$. Then, the others strictly positive  solutions can be represented in the form
$c_0Y,$  where $c_0>0$ and is arbitrary.
Then, we have 
\begin{eqnarray}\label{PTPPP7}
  1-\pi_i=\frac{c_0Y_i\sum\limits_{s=1}^n \bar a_{si} }{X_i}, \quad  i=\overline{1,n}.
\end{eqnarray}
We choose $c_0$ such that $0<\pi_i<1, \ i=\overline{1,n}.$
Then, $0<c_0< \min\limits_{1\leq i \leq n}\frac{X_i}{Y_i\sum\limits_{s=1}^n \bar a_{si}}.$
Let us show that the representation (\ref{Kur4}) is true and the conditions
  (\ref{111Kur2}) are valid.
   If to put $z_i=\frac{Y_i}{p_i^0}, \ i=\overline{1,n},$  and to take into account (\ref{PTPPP6}), then we get
\begin{eqnarray}\label{PTPPP8}
  1-\pi_i=\frac{c_0\sum\limits_{j=1}^n  a_{ij}z_j }{x_i}, \quad  i=\overline{1,n}.
\end{eqnarray}
Since $1<\pi_i<1, \ i=\overline{1,n},$ we have
$0<c_0<\min\limits_{1\leq i \leq n}\frac{x_i}{\sum\limits_{j=1}^n  a_{i j}z_j}.$

 The proof of the fact that the inequalities  (\ref{111Kur2}) takes place is such that as the proof of sufficiency in Theorem \ref{Kur3}.
Really, since we proved the representation   (\ref{Kur4}) for the  taxation system, then, as in the proof of sufficiency of Theorem \ref{Kur3}, we obtain that the strictly positive equilibrium price vector satisfies to the set of equations
 \begin{eqnarray}\label{PTPPP9}
 \frac{p_i^0}{\sum\limits_{s=1}^n a_{si}p_s^0}=\frac{z_i}{\sum\limits_{j=1}^n a_{ij}z_j},\quad  i=\overline{1,n}.
 \end{eqnarray} 
 Since the economy functions in the mode of sustainable  development, the inequalities (\ref{PTPPP3}) are true, which proves the needed.
 
 Sufficiency. Suppose that taxation system is given by the formula  (\ref{Kur4}) which satisfy the conditions (\ref{111Kur2}). Then, as in the proof of the sufficiency of Theorem \ref{Kur3}, we obtain that the equilibrium price vector is a solution to the set of equations (\ref{PTPPP9}) the solution of which exists due to Theorem \ref{Nato1}. It also solves the set of equations (\ref{PTPPP2})  and satisfies the condition (\ref{PTPPP3}), since the inequalities 
  \begin{eqnarray}\label{PTPPP10}
 \frac{p_i^0}{\sum\limits_{s=1}^n a_{si}p_s^0}=\frac{z_i}{\sum\limits_{j=1}^n a_{ij}z_j}>1,\quad  i=\overline{1,n},
 \end{eqnarray} 
 are valid.
 Theorem \ref{PTPPP4} is proved.
\end{proof}

  The following Theorem is a repetition of Theorem 16 from \cite{11Gonchar11}. Here we only specify what the taxation system is in this case, which was not in Theorem 16. 
\begin{te}\label{1KissP1}
 Let $x=\{x_i\}_{i=1}^n$ be a strictly positive gross output vector from $R_+^n,$ such that the vector $(1-\pi)x=\{1-\pi_k)x_k\}_{k=1}^n$ belongs to the interior of the cone created by the column vectors  of the matrix $A(E-A)^{-1},$ where   $A=||a_{ki}||_{k,i=1}^n$ is a non-negative non-decomposable  productive matrix. For the taxation system, given by the formula (\ref{Kur4}), the economy system, described by  "input - output" production model, can function in the mode of sustainable development.
\end{te}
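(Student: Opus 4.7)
The plan is to reduce the claim directly to Theorem \ref{Kur3}, which already gives necessary and sufficient conditions for sustainable development in terms of the existence of a strictly positive auxiliary vector $z=\{z_i\}_{i=1}^n$ satisfying the strict productivity-like inequality $z_i>\sum_j a_{ij}z_j$. Thus the entire task is to manufacture such a $z$ from the hypothesis that $(1-\pi)x$ lies in the interior of the cone spanned by the columns of $A(E-A)^{-1}$.

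First I would unpack the hypothesis. By definition, $(1-\pi)x$ belongs to the \emph{interior} of the cone generated by the columns of $A(E-A)^{-1}$ precisely when there exists a strictly positive vector $w=\{w_i\}_{i=1}^n\in R_+^n$ with $(1-\pi)x=A(E-A)^{-1}w$. The natural candidate to plug into Theorem \ref{Kur3} is $z:=(E-A)^{-1}w$.

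Second, I would verify the three properties of this $z$ needed to invoke Theorem \ref{Kur3}. Strict positivity of $z$ follows from the Neumann expansion $(E-A)^{-1}=\sum_{k=0}^\infty A^k$ together with non-decomposability of $A$, which forces some power $A^N$ to have strictly positive entries, so $(E-A)^{-1}$ itself is strictly positive and $z>0$. Multiplying on the left by $A$ yields $Az=A(E-A)^{-1}w=(1-\pi)x$, i.e.\ $(1-\pi_i)x_i=\sum_j a_{ij}z_j$; this exhibits $\pi$ in the form (\ref{Kur4}) with the admissible choice $b=1$, the constraint $b<\min_i x_i/\sum_j a_{ij}z_j$ reducing to $\pi_i>0$, which is part of the standing taxation hypothesis. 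Finally, the key inequality (\ref{111Kur2}) is $z_i>(Az)_i$, and since $(E-A)z=w$ is strictly positive by construction, this inequality holds componentwise.

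With a $z$ satisfying all requirements of Theorem \ref{Kur3} in hand and the representation (\ref{Kur4}) established, that theorem applies and concludes that the taxation ensures sustainable development. The only step that is not pure bookkeeping is the choice $z=(E-A)^{-1}w$: this single substitution simultaneously delivers strict positivity of $z$ (through indecomposability of $A$ and Neumann expansion) and the strict inequality $z>Az$ (because the coefficient $w$ in the interior-of-the-cone representation is strictly positive), which are precisely the two ingredients that Theorem \ref{Kur3} demands.
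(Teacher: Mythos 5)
Your proof is correct and follows essentially the same route as the paper's own argument: both identify the auxiliary vector as $z=(E-A)^{-1}$ applied to the strictly positive coefficient vector coming from the interior-of-the-cone representation of $(1-\pi)x$, deduce the key inequality $z_i>\sum_j a_{ij}z_j$ from $(E-A)z>0$, and then invoke the characterization of sustainable-development taxation systems (Theorem \ref{Kur3} in your case, the equivalent Theorem \ref{PTPPP4} in the paper's). If anything, your version is slightly tidier, since defining $z:=(E-A)^{-1}w$ outright avoids the paper's implicit cancellation of $A$ in passing from $Az=A(E-A)^{-1}\alpha$ to $z=(E-A)^{-1}\alpha$.
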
 
 \begin{proof}
 Due to Theorem 16 from \cite{11Gonchar11}, the conditions of which are fulfilled, the economy system can function in the mode of sustainable development. Theorem
 \ref{1KissP1} indicates only what taxation system is. 
 
Below, we give a new proof of this Theorem.  From the taxation system, given by the formula (\ref{Kur4}), we have
 \begin{eqnarray}\label{1KissP3} 
 (1-\pi_i)x_i= b\sum\limits_{i=1}^na_{ki}z_i^0=\sum\limits_{i=1}^na_{ki} z_i, \quad k=\overline{1,n},
  \end{eqnarray}
 for a certain strictly positive  vector $z_0=\{z_i^0\}_{i=1}^n,$
where we put $z=\{z_i\}_{i=1}^n = b \{z_i^0\}_{i=1}^n.$
Thanks to Theorem \ref{PTPPP4},  the equilibrium price vector  $p_0=\{p_i^0\}_{i=1}^n$ solves the set of equations
 \begin{eqnarray}\label{1KissP2}
 \frac{p_i^0}{\sum\limits_{s=1}^n a_{si}p_s^0}=\frac{z_i}{\sum\limits_{j=1}^n a_{ij}z_j},\quad  i=\overline{1,n}.
 \end{eqnarray}  
 Due to the conditions of  Theorem \ref{1KissP1},  for the vector $ (1-\pi)x$  the representation
  \begin{eqnarray}\label{1KissP4} 
  (1-\pi)x=A(E-A)^{-1}\alpha,  \quad \alpha=\{\alpha_k\}_{k=1}^n, \quad \alpha_k >0,\quad 
 k=\overline{1,n},
  \end{eqnarray}  
is true.  The representation (\ref{1KissP4})  for the vector $z=\{z_i\}_{i=1}^n$ gives us
the formula $z=(E-A)^{-1}\alpha$ for a  certain strictly positive vector $\alpha=\{\alpha_i\}_{i=1}^n.$
From this it follows that  for the vector   $z=\{z_i\}_{i=1}^n$  the inequalities 
 \begin{eqnarray}\label{1KissP5}
 \frac{p_i^0}{\sum\limits_{s=1}^n a_{si}p_s^0}=\frac{z_i}{\sum\limits_{j=1}^n a_{ij}z_j}>1,\quad  i=\overline{1,n},
 \end{eqnarray} 
 are true. 
 This proves Theorem \ref{1KissP1}.
 \end{proof}

 \begin{te}\label{KissP1} 
  Let $x=\{x_i\}_{i=1}^n$ be a strictly positive gross output vector from $R_+^n$ and let  $A=||a_{ki}||_{k,i=1}^n$ be a non-negative  non-decomposable productive matrix.
  Suppose that the economy be in the mode of sustainable development with a strictly positive gross output vector $x=\{x_i\}_{i=1}^n$, solving the system of equations (\ref{Kur7}), under the taxation system
(\ref{Kur4}).  Then the set of pure branches $N=\{1,2, \ldots,n\}$ is divided into two disjoint sets $I$ and $J$ such that $I\cup J=N$ and the inequalities
 \begin{eqnarray}\label{KissP2}
 \Delta_k \leq C_k+E_k-I_k, \quad k \in I, \quad  \Delta_k > C_k+E_k-I_k, \quad k \in J, 
  \end{eqnarray} 
  are  valid, where 
$$   \Delta_k=X_k\left(1-\sum\limits_{s=1}^n \frac{p_s^0 a_{s i}}{p_i^0}\right), \quad X_k=p_k^0 x_k , $$ 
 \begin{eqnarray}\label{KissP3}   
  C_k=c_k p_k^0,    \quad E_k=e_k p_k^0, \quad I_k=i_k p_k^0,  \quad k=\overline{1,n},
\end{eqnarray} 
$p_0=\{p_i^0\}_{i=1}^n$ is an equilibrium price vector.
 \end{te}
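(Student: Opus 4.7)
The plan is to extract the substantive content of the theorem, which is a macroeconomic conservation identity $\sum_k\Delta_k = \sum_k(C_k+E_k-I_k)$; the asserted partition then follows tautologically from defining $I$ and $J$ by the two opposing inequalities.

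First I would multiply equation (\ref{Kur7}) through by the equilibrium price $p_k^0$ to obtain the per-industry value balance
\[
X_k \;=\; \sum_{i=1}^n p_k^0 a_{ki} x_i \,+\, C_k + E_k - I_k, \qquad k=\overline{1,n}.
\]
Using the stated form $\Delta_k = X_k - x_k\sum_{s=1}^n p_s^0 a_{sk}$ from (\ref{KissP3}), this subtracts to
\[
\Delta_k - (C_k+E_k-I_k) \;=\; \sum_{i=1}^n p_k^0 a_{ki} x_i \,-\, x_k\sum_{s=1}^n p_s^0 a_{sk}.
\]
Summing over $k$ and relabeling the dummy indices in the first double sum (swap $k\leftrightarrow i$, then rename $k\to s$) shows that $\sum_{k,i} p_k^0 a_{ki} x_i = \sum_{k,s} x_k p_s^0 a_{sk}$, so the right-hand side cancels and the identity $\sum_k\Delta_k=\sum_k(C_k+E_k-I_k)$ emerges.

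Finally I would set $I := \{k\in N : \Delta_k \leq C_k+E_k-I_k\}$ and $J := N\setminus I$; these are disjoint with $I\cup J=N$, and the inequalities in (\ref{KissP2}) hold on each set by definition. The aggregate identity additionally forces the excesses on $J$ and deficits on $I$ to balance, so the partition is meaningful rather than degenerate whenever strict inequality occurs anywhere.

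The main ``obstacle'' is purely conceptual rather than technical: there is no real calculation to grind through, since the partition is free and the underlying identity is essentially a one-line consequence of (\ref{Kur7}). Notably, the explicit form (\ref{Kur4}) of the taxation system and the equilibrium equation (\ref{Kur2}) play no role in the identity itself; they enter only through the hypothesis of sustainable development, which supplies a strictly positive equilibrium price vector $p_0$ so that $X_k$, $C_k$, $E_k$, $I_k$ and $\Delta_k$ are well-defined. One minor source of confusion to flag is the notational clash between the index set $I$ and the import values $I_k$, which remain distinct objects despite the shared letter.
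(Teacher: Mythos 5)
Your proof is correct, but it takes a different route from the paper's. The paper does not define $I$ and $J$ tautologically by the conclusion; it defines them by the ratio test
\[
\frac{z_i}{\sum_{j=1}^n a_{ij}z_j}\;\leq\;\frac{x_i}{\sum_{j=1}^n a_{ij}x_j}\quad (i\in I),\qquad
\frac{z_i}{\sum_{j=1}^n a_{ij}z_j}\;>\;\frac{x_i}{\sum_{j=1}^n a_{ij}x_j}\quad (i\in J),
\]
where $z$ is the vector parametrizing the taxation system in (\ref{Kur4}), and then uses the equilibrium identity $\frac{p_i^0}{\sum_s a_{si}p_s^0}=\frac{z_i}{\sum_j a_{ij}z_j}$ from Theorem \ref{PTPPP4} together with $x_i-\sum_j a_{ij}x_j=c_i+e_i-i_i$ to convert each case into the corresponding inequality of (\ref{KissP2}). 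So where your argument treats (\ref{Kur4}) and (\ref{Kur2}) as irrelevant to the statement (which is literally true of the statement as written), the paper's proof extracts the actual content of the theorem: membership in $I$ or $J$ is determined by how the tax-system vector $z$ distorts the proportions $x$, i.e.\ the deviation of $(C+E-I)-\Delta$ from zero is controlled by the deviation of the taxation system from the perfect one $z=x$. Your aggregate conservation identity $\sum_k\Delta_k=\sum_k(C_k+E_k-I_k)$, obtained by pricing (\ref{Kur7}) and cancelling the double sums, is correct and is a genuine addition not stated in the paper; it shows the surpluses on $J$ exactly offset the deficits on $I$, which is a useful non-degeneracy remark. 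Both arguments are valid; the paper's buys an explicit characterization of the two classes of industries in terms of the taxation data, while yours buys the global balance.
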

 \begin{proof}
 Provided that the economy  is in the mode of sustainable development, the equilibrium price vector   $ p_0=\{p_i^0\}_{i=1}^n$ satisfies the system of equations
 \begin{eqnarray}\label{KissP4}
 \frac{p_i^0}{\sum\limits_{s=1}^n a_{si}p_s^0}=\frac{z_i}{\sum\limits_{j=1}^n a_{ij}z_j}>1,\quad  i=\overline{1,n},
 \end{eqnarray}   
for a strictly positive vector $z= \{z_i\}^n$ such that 
 \begin{eqnarray}\label{KissP5}
(1-\pi_k)x_k=b\sum\limits_{j=1}^n a_{kj}z_j, \quad k=\overline{1,n},
  \end{eqnarray} 
  where 
  \begin{eqnarray}\label{KissP6}
 \pi_k=1- b \frac{\sum\limits_{i=1}^n a_{ki}z_i}{x_k},\quad  \quad 0<b <\min\limits_{1\leq k\leq n} \frac{x_k}{\sum\limits_{j=1}^n a_{kj}z_j},\quad  k=\overline{1,n}.
 \end{eqnarray} 
Two cases are possible
    \begin{eqnarray}\label{KissP7}
  \frac{z_i}{\sum\limits_{j=1}^n a_{ij}z_j} \leq  \frac{x_i}{\sum\limits_{j=1}^n a_{ij}x_j},\quad  i\in I,
 \end{eqnarray}   
 and
 \begin{eqnarray}\label{KissP8}
  \frac{z_i}{\sum\limits_{j=1}^n a_{ij}z_j} >  \frac{x_i}{\sum\limits_{j=1}^n a_{ij}x_j},\quad  i \in J.
 \end{eqnarray}   
 
  In the first case  (\ref{KissP7}) we have 
  \begin{eqnarray}\label{KissP9} 
  \frac{p_i^0}{\sum\limits_{s=1}^n a_{si}p_s^0} \leq \frac{x_i}{\sum\limits_{j=1}^n a_{ij}x_j},\quad  i \in I,
   \end{eqnarray} 
 or 
  \begin{eqnarray}\label{KissP10} 
  \frac{\sum\limits_{s=1}^n a_{si}p_s^0}{p_i^0} \geq \frac{\sum\limits_{j=1}^n a_{ij}x_j}{x_i},\quad  i \in I.
   \end{eqnarray}  
    From here we get
      \begin{eqnarray}\label{KissP11} 
1-  \frac{\sum\limits_{s=1}^n a_{si}p_s^0}{p_i^0} \leq 1-\frac{\sum\limits_{j=1}^n a_{ij}x_j}{x_i}=\frac{c_i+e_i-i_i}{x_i},\quad  i \in I.
   \end{eqnarray}    
 From the last inequality we have  
      \begin{eqnarray}\label{KissP12} 
\Delta_i=p_i^0x_i\left(1-  \frac{\sum\limits_{s=1}^n a_{si}p_s^0}{p_i^0}\right) \leq p_i^0( x_i-\sum\limits_{j=1}^n a_{ij}x_j)=C_i+E_i-I_i,\quad  i \in I.
   \end{eqnarray}  
    The consideration  of the second case (\ref{KissP8})  proceeds similarly. As a result, we get that 
      \begin{eqnarray}\label{KissP12} 
\Delta_i=p_i^0x_i\left(1-  \frac{\sum\limits_{s=1}^n a_{si}p_s^0}{p_i^0}\right) > p_i^0( x_i-\sum\limits_{j=1}^n a_{ij}x_j)=C_i+E_i-I_i,\quad  i \in J.
   \end{eqnarray}
    Theorem \ref{KissP1} is proved.  
 \end{proof}
 
 \begin{de}\label{kissTk1}
 We say  say that under a taxation system 
 $\pi=\{\pi_i\}_{i=1}^n $ certain industries need subsidies if the strictly positive equilibrium price  vector $p^0=\{p^0_i\}_{i=1}^n $ solving the set of equations 
 \begin{eqnarray}\label{kissTk2}
  \sum\limits_{i=1}^n a_{ki}\frac{(1-\pi_i)x_i p_i^0}{ \sum\limits_{s=1}^na_{si}p_s^0 }=(1-\pi_k) x_k, \quad  i=\overline{1,n},
 \end{eqnarray}
 is such that the added values created in these industries are negative for
 the strictly positive gross output vector $x=\{x_k\}_{k=1}^n,$   satisfying the set of equations (\ref{Kur7}). 
 \end{de}
 
  \begin{te}\label{PPT1}
Let the gross output vector   $x=\{x_i\}_{i=1}^n$  be a solution to the set of equations (\ref{Kur7}) and let  $A=||a_{k i}||_{k,i=1}^n$ be a non negative non-decomposable productive matrix.  
 For the vector of taxation $\pi=\{ \pi_i\}_{i=1}^n,$ such that 
 \begin{eqnarray}\label{PPT2}
 1-\pi_i=b\frac{\sum\limits_{j=1}^n a_{i j}z_j}{x_i}, \quad  i=\overline{1,n},\quad 0< b<\min\limits_{1\leq i\leq  n}\frac{x_i}{\sum\limits_{j=1}^n a_{i j}z_j},
 \end{eqnarray}
and non empty set
 \begin{eqnarray}\label{PPT5}
 J=\left\{k, \  \frac{z_k}{\sum\limits_{j=1}^na_{kj}z_j}<1\right\} \subset N,
  \end{eqnarray} 
there exists a strictly positive equilibrium price vector  $p_0=\{p_i^0\}_{i=1}^n$ solving the set of equations
\begin{eqnarray}\label{PPT3}
  \sum\limits_{i=1}^n a_{ki}\frac{(1-\pi_i)x_i p_i^0}{ \sum\limits_{s=1}^na_{si}p_s^0 }=(1-\pi_k) x_k, \quad  i=\overline{1,n},
 \end{eqnarray}
and such that  the industries, the indexes of which belongs to the set $J,$ needs subsidies.
The subsides into  the $k$-th industry should  not be  smaller than 
  \begin{eqnarray}\label{PPT6}
x_k p_k^0\left(\frac{\sum\limits_{j=1}^na_{kj}z_j}{z_j}- 1\right),\quad k \in J.
 \end{eqnarray}
 \end{te}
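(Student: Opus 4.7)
The plan is to construct the equilibrium price vector via Theorem \ref{Nato1}, show that the condition defining $J$ forces the value added in each such industry to be negative, and then read off the minimum subsidy from that shortfall.

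\emph{Existence of $p_0$.} By Theorem \ref{Nato1}, applied to the strictly positive vector $z=\{z_i\}_{i=1}^n$, there exists a strictly positive $p_0=\{p_i^0\}_{i=1}^n$ satisfying
$$\frac{p_i^0}{\sum_{s=1}^n a_{si}p_s^0}=\frac{z_i}{\sum_{j=1}^n a_{ij}z_j}, \quad i=\overline{1,n}.$$
Substituting the representation (\ref{PPT2}) of $(1-\pi_i)x_i$ into the left-hand side of (\ref{PPT3}) and then using this identity to replace $\frac{p_i^0}{\sum_s a_{si}p_s^0}$, the $k$-th equation reduces to $b\sum_{i=1}^n a_{ki}z_i=(1-\pi_k)x_k$, which is again (\ref{PPT2}). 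This is precisely the sufficiency computation already carried out in Theorem \ref{Kur3}, so $p_0$ solves (\ref{PPT3}).

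\emph{Negative value added for $k\in J$.} As in Theorem \ref{KissP1}, the value added generated by industry $k$ at the prices $p_0$ equals
$$\Delta_k = p_k^0 x_k - x_k\sum_{s=1}^n p_s^0 a_{sk} = x_k p_k^0\left(1-\frac{\sum_{s=1}^n a_{sk}p_s^0}{p_k^0}\right).$$
For $k\in J$ the defining inequality $\frac{z_k}{\sum_{j=1}^n a_{kj}z_j}<1$, combined with the identity of the first step, yields $\frac{p_k^0}{\sum_{s=1}^n a_{sk}p_s^0}<1$, hence $p_k^0<\sum_{s=1}^n a_{sk}p_s^0$ and therefore $\Delta_k<0$. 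By Definition \ref{kissTk1} these industries cannot sustain production without external support.

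\emph{Lower bound on the subsidies.} To keep industry $k\in J$ operating the subsidy $S_k$ must at least compensate the loss $|\Delta_k|$, so
$$S_k \ge x_k\sum_{s=1}^n p_s^0 a_{sk} - x_k p_k^0 = x_k p_k^0\left(\frac{\sum_{s=1}^n a_{sk}p_s^0}{p_k^0}-1\right) = x_k p_k^0\left(\frac{\sum_{j=1}^n a_{kj}z_j}{z_k}-1\right),$$
which is the stated bound. The argument is essentially a direct combination of Theorem \ref{Nato1}, Theorem \ref{Kur3}, and the value-added computation of Theorem \ref{KissP1}; the only point needing explicit verification is that the strict inequality defining $J$ is preserved when translated from the $z$-ratio to the $p_0$-ratio, and this is immediate from the identity in the first step, so no genuinely new difficulty arises.
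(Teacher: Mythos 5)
Your proof is correct and follows essentially the same route as the paper: Theorem \ref{Nato1} supplies the strictly positive $p_0$ through the ratio identity $\frac{p_i^0}{\sum_{s}a_{si}p_s^0}=\frac{z_i}{\sum_{j}a_{ij}z_j}$, the sufficiency computation of Theorem \ref{Kur3} verifies (\ref{PPT3}), and the sign of $\Delta_k$ together with the subsidy bound follows by translating the inequality defining $J$ through that identity. The one step the paper includes that you omit is the observation that $J$ is necessarily a proper subset of $N$: if $z_k<\sum_{j}a_{kj}z_j$ held for every $k$, productivity of $A$ would force the bounded vector $z$ to vanish, contradicting $z>0$ --- a consistency check ensuring that some industries have nonnegative value added and can therefore fund the subsidies.
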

 \begin{proof}
Let us show that the set $J$ does not coincide with the set $N=\{1,2, \ldots,n\}.$
 
The proof from the opposite. If the set $J$ would coincide with the set $N,$ then we get the set of inequalities
   \begin{eqnarray}\label{PPPT7}
   z_k <\sum\limits_{j=1}^na_{kj}z_j, \quad k=\overline{1,n}.
    \end{eqnarray}
 Due to  productivity of  the matrix $A,$ the only bounded solution of  set of inequalities
(\ref{PPPT7} ) is zero solution. Contradiction, since the vector  $z=\{z_i\}_{i=1}^n$ is strictly positive one.

 Since the vector $p_0=\{p_i^0\}_{i=1}^n$ is a strictly positive solution of the set of equations (\ref{Nato2}), then we have
 \begin{eqnarray}\label{PPT7}
 p_i^0- \sum\limits_{s=1}^na_{si} p_s^0=p_i^0\left(1-\frac{\sum\limits_{j=1}^na_{ij}z_j}{z_i}\right),\quad i=\overline{1,n}.
 \end{eqnarray}
 From the formula (\ref{PPT7}), we obtain the 
 formula  
 \begin{eqnarray}\label{PPT8}
\Delta_i =x_i( p_i^0- \sum\limits_{s=1}^na_{si} p_s^0)=x_i p_i^0\left(1-\frac{\sum\limits_{j=1}^na_{ij}z_j}{z_i}\right)<0 , \quad  i \in J.
\end{eqnarray}
Theorem \ref{PPT1} is proved. 
 \end{proof}

  Below, we will study taxation systems of a special type, which we call perfect.

\begin{de}\label{Kur9}
A system of taxation $\pi=\{\pi_k\}_{k=1}^n,\  0<\pi_k<1, \ k=\overline{1,n},$  that ensures sustainable economic development   we call perfect if the equalities 
  \begin{eqnarray}\label{Kur10} 
 x_k \delta_k=y_k p_k, 
 \quad k=\overline{1,n},
  \end{eqnarray}   
  are true, where the gross output vector $x=\{x_k\}_{k=1}^n$ satisfies the system of equations (\ref{Kur7}) and the equilibrium price vector satisfies the set of equations (\ref{Kur8}). 
\end{de} 
 \begin{te}\label{nessuf1}
 Let  $A=||a_{k i}||_{k,i=1}^n$ be a non negative non-decomposable   productive matrix and let the  vector $x=\{x_k\}_{k=1}^n$  be a strictly positive solution to the set of equations (\ref{Kur7}). Then, there exists  always the perfect  system of taxation $\pi=\{\pi_k\}_{k=1}^n$ given by the formula
 \begin{eqnarray}\label{1Kur7}
 1-\pi_i=b\frac{\sum\limits_{j=1}^n a_{i j}x_j}{x_i}, \quad  i=\overline{1,n},\quad 0< b<\min\limits_{1\leq i\leq  n}\frac{x_i}{\sum\limits_{j=1}^n a_{i j}x_j},
 \end{eqnarray}
  and  a strictly positive equilibrium price vector  $p_0=\{p_i^0\}_{i=1}^n$ solving the set of equations
\begin{eqnarray}\label{nessuf2}
  \sum\limits_{i=1}^n a_{ki}\frac{(1-\pi_i)x_i p_i^0}{ \sum\limits_{s=1}^na_{si}p_s^0 }=(1-\pi_k) x_k, \quad  i=\overline{1,n},
 \end{eqnarray}
which also satisfies the set of equation 
 \begin{eqnarray}\label{nessuf3}
 p_i^0- \sum\limits_{s=1}^na_{si} p_s^0=p_i^0\left(1-\frac{\sum\limits_{j=1}^na_{ij}x_j}{x_i}\right),\quad i=\overline{1,n}.
 \end{eqnarray}
\end{te}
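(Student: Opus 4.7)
The plan is to apply Theorem \ref{Kur3} with the particular choice $z = x$, so that the formula (\ref{1Kur7}) becomes precisely formula (\ref{Kur4}) specialized to this $z$. First I would note that the admissible range for $b$ prescribed in (\ref{1Kur7}) automatically guarantees $0 < \pi_i < 1$, so $\pi$ is a legitimate taxation vector. The one nontrivial hypothesis of Theorem \ref{Kur3} that has to be verified is the positivity condition (\ref{111Kur2}) at $z = x$, namely $x_i > \sum_{j=1}^n a_{ij} x_j$ for every $i$; by (\ref{Kur7}) this is equivalent to the strict positivity of the final-demand vector $c_i + e_i - i_i > 0$, and I would record this as the extra input needed to make Theorem \ref{Kur3} applicable here.

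Once (\ref{111Kur2}) is in force with $z = x$, Theorem \ref{Kur3} yields a strictly positive equilibrium price vector $p_0$ solving (\ref{nessuf2}); and the construction in its proof (via Theorem \ref{Nato1}) simultaneously gives the identity
$$\frac{p_i^0}{\sum_{s=1}^n a_{si} p_s^0} \;=\; \frac{x_i}{\sum_{j=1}^n a_{ij} x_j}, \qquad i=\overline{1,n}.$$
Solving this for $\sum_s a_{si} p_s^0 = p_i^0 \sum_j a_{ij} x_j / x_i$ and substituting into $p_i^0 - \sum_s a_{si} p_s^0$ immediately produces the asserted identity (\ref{nessuf3}).

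It then remains to verify the perfection property of Definition \ref{Kur9}. From (\ref{nessuf3}), the value added in industry $i$ is
$$\delta_i \;=\; p_i^0 - \sum_{s=1}^n a_{si} p_s^0 \;=\; \frac{p_i^0 \bigl(x_i - \sum_{j=1}^n a_{ij} x_j\bigr)}{x_i} \;=\; \frac{p_i^0 y_i}{x_i},$$
where $y_i := c_i + e_i - i_i$ is the final-demand component read off from (\ref{Kur7}). Multiplying by $x_i$ gives $x_i \delta_i = y_i p_i^0$, which is precisely the defining equality (\ref{Kur10}) of a perfect tax system.

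The main obstacle I anticipate is not technical but one of hypothesis bookkeeping: the bare assumption that $x$ is a strictly positive solution of (\ref{Kur7}) does not by itself force $c + e - i > 0$, and without this inequality the strict bound in (\ref{111Kur2}) may fail at $z = x$, blocking the appeal to Theorem \ref{Kur3}. I would therefore either adjoin this positivity as an explicit hypothesis (matching the interior-of-the-cone condition used in Theorem \ref{1KissP1}) or invoke it implicitly as part of what it means for the economy to operate in a sustainable-development regime; once granted, the argument reduces to a direct specialization of the apparatus already assembled in Theorems \ref{Nato1} and \ref{Kur3}.
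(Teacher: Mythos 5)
Your proposal is correct and follows essentially the same construction as the paper: set $z=x$, obtain via Theorem \ref{Nato1} a strictly positive $p_0$ with $p_i^0/\sum_{s}a_{si}p_s^0 = x_i/\sum_{j}a_{ij}x_j$, read off (\ref{nessuf3}), and check that $1-\pi_i = b\sum_j a_{ij}x_j/x_i$ solves the resulting system, the identity $x_i\delta_i=(c_i+e_i-i_i)p_i^0$ then being immediate. The only real difference is packaging: the paper argues in the reverse direction (it starts from the perfection requirement (\ref{Kur10}), shows it is equivalent to (\ref{Kur11}) and hence to (\ref{1Kur11}), and invokes Theorem \ref{Nato1} directly), whereas you route through Theorem \ref{Kur3}, which drags in the extra condition (\ref{111Kur2}) at $z=x$, i.e.\ $x_i>\sum_j a_{ij}x_j$. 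Your observation that this condition is not implied by the stated hypotheses (it amounts to $c_i+e_i-i_i>0$) is accurate and is not a defect of your argument: the paper's own proof of Theorem \ref{nessuf1} silently omits verifying the ``ensures sustainable development'' clause built into Definition \ref{Kur9} (which requires (\ref{111Kur1}) and hence exactly this positivity), and only supplies the missing hypothesis later, as (\ref{real1}) in Theorem \ref{Nato8}. So your hypothesis bookkeeping is, if anything, more careful than the source.
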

\begin{proof}
To prove Theorem \ref{nessuf1}, it needs to indicate  the added values $\delta_i, i=\overline{1,n},$ taxation system $\pi=\{\pi_k\}_{k=1}^n$ under which the equalities (\ref{Kur10}) are true. 
Since the gross output vector $x=\{x_k\}_{k=1}^n$ satisfies the system of equations (\ref{Kur7}) and the equilibrium price vector $p_0=\{p_k^0\}_{k=1}^n$ satisfies the set of equations (\ref{Kur8}),
the equalities (\ref{Kur10}) are equivalent to the equalities
 \begin{eqnarray}\label{Kur11}
p_k^0\sum\limits_{i=1}^n a_{ki}x_i=\sum\limits_{s=1}^n a_{sk}p_s^0 x_k,\quad k=\overline{1,n}.
 \end{eqnarray}
 So, from the set of equations (\ref{Kur11}) it follows that  the equilibrium price vector  $ p_0=\{p_i^0\} _{i=1}^n$ should satisfy the set of equations
  \begin{eqnarray}\label{1Kur11}
  \frac{p_k^0}{\sum\limits_{s=1}^n a_{sk}p_s^0}=\frac{x_k}{\sum\limits_{i=1}^n a_{ki}x_i},\quad k=\overline{1,n},
  \end{eqnarray}
 the strictly positive solution of which
 $ p_0=\{p_i^0\} _{i=1}^n$ there exists always due to Theorem \ref{Nato1}.
 Substituting the solution $ p_0=\{p_i^0\} _{i=1}^n$ of the set of equations (\ref{1Kur11})  into the set of equations (\ref{nessuf2}) we get the set of equations relative to the vector $\pi=\{\pi_i\}_{i=1}^n$
 \begin{eqnarray}\label{nessuf5}
  \sum\limits_{i=1}^n a_{ki}\frac{(1-\pi_i) x_i^2}{ \sum\limits_{j=1}^na_{ij}x_j }=(1-\pi_k) x_k, \quad  i=\overline{1,n}.
 \end{eqnarray}
  If we put that,  
 \begin{eqnarray}\label{nessuf4}
 1-\pi_i=b\frac{\sum\limits_{j=1}^n a_{i j}x_j}{x_i}, \quad  i=\overline{1,n},\quad 0< b<\min\limits_{1\leq i\leq  n}\frac{x_i}{\sum\limits_{j=1}^n a_{i j}x_j},
 \end{eqnarray}
then the system of equations (\ref{nessuf5}) is satisfied.
\end{proof}

 Below,  we consider the economy system described by "input - output" model with non negative non-decomposable productive matrix of direct costs 
$A=||a_{k i}||_{k,i=1}^n$ and strictly positive  gross output vector $x=\{x_k\}_{k=1}^n$  that satisfies the set of equations
\begin{eqnarray}\label{Nato9}
x_k=\sum\limits_{i=1}^n a_{ki} x_i +c_k+e_k -i_k, \quad k=\overline{1,n},
 \end{eqnarray}
with the following limitations
\begin{eqnarray}\label{real1}
 c_k+e_k -i_k>0,  \quad  k=\overline{1,n},  
  \end{eqnarray}
where
$$ c=\{c_k\}_{k=1}^n, \quad e=\{e_k\}_{k=1}^n, \quad  i=\{i_k\}_{k=1}^n, $$
are the vectors of final consumption, export and import, correspondingly.

 \begin{te}\label{Nato8}
  Let  $A=||a_{k i}||_{k,i=1}^n$ be a non negative non-decomposable   productive matrix and let the strictly positive  gross output vector   $x=\{x_i\}_{i=1}^n$  be a solution to the set of equations (\ref{Nato9}) with limitations (\ref{real1}).  For the vector of taxation $\pi=\{ \pi_i\}_{i=1}^n,$ where 
 \begin{eqnarray}\label{Nato5}
  \pi_i=1-b\frac{ \sum\limits_{k=1}^na_{ik}x_k}{x_i}, \quad i=\overline{1,n}, \quad  0 <b<\min\limits_{1\leq i\leq n} \frac{x_i} { \sum\limits_{k=1}^na_{ik}x_k},
 \end{eqnarray}
there exists a strictly positive equilibrium price vector  $p_0=\{p_i^0\}_{i=1}^n$ solving the set of equations
\begin{eqnarray}\label{Nato6}
  \sum\limits_{i=1}^n a_{ki}\frac{(1-\pi_i)x_i p_i^0}{ \sum\limits_{s=1}^na_{si}p_s^0 }=(1-\pi_k) x_k, \quad  i=\overline{1,n},
 \end{eqnarray}
which also satisfies the set of equation 
 \begin{eqnarray}\label{Nat13}
 p_i^0- \sum\limits_{s=1}^na_{si} p_s^0=p_i^0\left(1-\frac{\sum\limits_{j=1}^na_{ij}x_j}{x_i}\right)>0,\quad i=\overline{1,n}.
 \end{eqnarray}
and is such that the economy system described by "input - output" model with non-decomposable productive matrix of direct costs 
$A=||a_{k i}||_{k,i=1}^n$ and gross output vector $x=\{x_k\}_{k=1}^n$   is capable to function in the mode of sustainable development.
 \end{te}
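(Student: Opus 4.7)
The plan is to reduce Theorem \ref{Nato8} to a direct application of Theorem \ref{nessuf1}, combined with a short verification that the extra positivity hypothesis (\ref{real1}) produces the strict inequality in (\ref{Nat13}), which is precisely what distinguishes a ``perfect taxation system that also ensures sustainable development'' from the merely perfect case.

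First I would observe that the taxation vector defined in (\ref{Nato5}) is, up to renaming, identical to the taxation vector (\ref{1Kur7}) appearing in Theorem \ref{nessuf1}: in both cases one has $1-\pi_i=b\,\sum_{j}a_{ij}x_j/x_i$ with the same admissible range for $b$. Hence Theorem \ref{nessuf1} applies directly and yields the existence of a strictly positive equilibrium price vector $p_0=\{p_i^0\}_{i=1}^n$ solving (\ref{Nato6}) together with the identity
\begin{equation*}
p_i^0-\sum_{s=1}^n a_{si}p_s^0 = p_i^0\Bigl(1-\frac{\sum_{j=1}^n a_{ij}x_j}{x_i}\Bigr),\qquad i=\overline{1,n}.
\end{equation*}
This is exactly (\ref{Nat13}) except possibly for the strict inequality.

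Next I would turn on hypothesis (\ref{real1}). Since $x$ satisfies (\ref{Nato9}), we have $x_i-\sum_{j=1}^n a_{ij}x_j=c_i+e_i-i_i>0$ for every $i$, so $1-\sum_{j=1}^n a_{ij}x_j/x_i>0$. Combining with the strict positivity of $p_0$ furnished by Theorem \ref{nessuf1} immediately gives $p_i^0-\sum_{s=1}^n a_{si}p_s^0>0$ for all $i$, which establishes (\ref{Nat13}).

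Finally, to conclude that the economy functions in the mode of sustainable development I would match the conditions obtained against Definition \ref{PTPPP1}: the price vector $p_0$ is strictly positive, it solves the system (\ref{PTPPP2}) (which coincides with (\ref{Nato6})), and it satisfies the positivity condition (\ref{PTPPP3}), which is precisely the strict inequality just established. Since no step beyond invoking Theorem \ref{nessuf1} and unwrapping the positivity of $c_i+e_i-i_i$ is needed, there is really no serious obstacle here; the only point requiring a moment's care is to confirm that the particular $b$ chosen in (\ref{Nato5}) is admissible in Theorem \ref{nessuf1}, which follows because the two admissibility ranges coincide. This completes the proof sketch.
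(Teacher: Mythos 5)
Your proposal is correct and follows essentially the same route as the paper: invoke Theorem \ref{nessuf1} for the existence of the strictly positive equilibrium price vector and the identity in (\ref{Nat13}), then use (\ref{Nato9}) together with the positivity restriction (\ref{real1}) to obtain $x_i-\sum_{j}a_{ij}x_j=c_i+e_i-i_i>0$ and hence the strict inequality, which is exactly the condition needed for sustainable development. No substantive difference from the paper's argument.
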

\begin{proof}
The proof of the  first part of Theorem \ref{Nato8} follows from Theorem \ref{nessuf1}. 
Since the vector  $x=\{x_i\}_{i=1}^n$  is a solution to the set of equations (\ref{Nato9}) with limitations (\ref{real1}), we obtain that   
$$ \frac{ x_k}{\sum\limits_{k=1}^na_{k i}x_i}=1+\frac{c_k+e_k -i_k}{\sum\limits_{i=1}^na_{k i}x_i}>1, \quad  k=\overline{1,n}. $$
It is not difficult to find that the
created  added value  is given by the formula
$$\delta_i^0= p_i^0- \sum\limits_{s=1}^na_{si} p_s^0=$$
\begin{eqnarray}\label{Nato13}
\left(\frac{x_k}{\sum\limits_{j=1}^na_{ij}x_j}-1\right)\frac{\sum\limits_{s=1}^na_{si} p_s^0}{p_i^0}p_i^0=p_i^0\left(1-\frac{\sum\limits_{j=1}^na_{ij}x_j}{x_i}\right)>0,\quad  i=\overline{1,n}.
 \end{eqnarray}
 Theorem \ref{Nato8} is proved.
 \end{proof}
  \begin{ce}\label{Nat16}
 In the mode of sustainable development the gross added value created in the $i$-th industry is equal to the value of the final product created in this industry.
 \end{ce}
 \begin{proof}
 From the formula (\ref{Nato13}) we have
  \begin{eqnarray}\label{Nat14}
\frac{\Delta_i}{X_i}=\frac{x_i\delta_i^0}{x_ip_i^0}=1-\frac{\sum\limits_{j=1}^n \frac{p_i^0 a_{ij}}{p_j^0}p_j^0x_j}{p_i^0x_i}=\frac{C_i+E_i-I_i}{X_i},\quad i=\overline{1,n},
\end{eqnarray}
where we introduced the denotations $X_k=x_k p_k^0,$ $C_k=c_k p_k^0,$ $E_k=e_k p_k^0,$ $I_k=i_k p_k^0.$ From here we get $\Delta_i=C_i+E_i-I_i.$
Consequence \ref{Nat16} is proved.
 \end{proof}
Based on Theorems \ref{Nato1} - \ref{Nato8}, we conclude that the deviation  the value $(C+E -I)- \Delta$ from zero depends on the taxation system. If the taxation system is given by formulas (\ref{Nato5}), then the perfect sustainable economic development takes place. Theorem \ref{KissP1} states that this deviation depends on the deviation of the taxation system from the perfect one.  
So, the characteristic of the taxation system in the mode of sustainable development is the number of negative and positive signs of  of the value $(C+E -I)- \Delta.$ 
If the sign of $(C+E -I)- \Delta$  is negative, then we are dealing with high-tech production with high added value. Conversely, if the sign of $(C+E -I)- \Delta$  is positive, then we are dealing with low-tech production with low added value.

The number of negative and positive signs of the value $(C+E -I)- \Delta$ we call the signature of taxation.

Every economy in the world is open to its environment. That is, they all exchange goods, labor resources and capital among themselves. This happens due to uneven distribution of resources, excess production of goods. Some countries are rich in resources, while others have high-tech industries. Because of this, some import raw materials, while others import goods with high added value.
Below we define the conditions under which an open economic system that imports both produced goods and resources can operate in the mode with subsides of certain industries.

 Theorem \ref{10Nato8} takes into account such a situation.
Below,  we consider the economy system described by "input - output" model with non negative non-decomposable productive matrix of direct costs 
$A=||a_{k i}||_{k,i=1}^n$ and gross output vector $x=\{x_k\}_{k=1}^n$  that satisfies the set of equations
\begin{eqnarray}\label{10Nato9}
x_k=\sum\limits_{i=1}^n a_{ki} x_i +c_k+e_k -i_k, \quad k=\overline{1,n},
 \end{eqnarray}
with the limitations
 \begin{eqnarray}\label{subsid1}
  c_k+e_k -i_k \geq 0,\   k \in I \neq \emptyset, \quad  c_k+e_k -i_k< 0, \ k \in J, \ J\neq \emptyset,
 \end{eqnarray}  
where  $c=\{c_k\}_{k=1}^n,$ $e=\{e_k\}_{k=1}^n,$ $i=\{i_k\}_{k=1}^n,$ are the vectors of final consumption, export and import, correspondingly.

 \begin{te}\label{10Nato8}
  Let  $A=||a_{k i}||_{k,i=1}^n$ be a non negative non-decomposable   productive matrix and let
the strictly positive  gross output vector   $x=\{x_i\}_{i=1}^n$  be a solution to the set of equations (\ref{10Nato9}) with limitations (\ref{subsid1}).  For the vector of taxation $\pi=\{ \pi_i\}_{i=1}^n,$ where 
 \begin{eqnarray}\label{10Nato5}
 1-\pi_i=b\frac{\sum\limits_{j=1}^n a_{i j}x_j}{x_i}, \quad  i=\overline{1,n},\quad 0< b<\min\limits_{1\leq i\leq  n}\frac{x_i}{\sum\limits_{j=1}^n a_{i j}x_j},
 \end{eqnarray}
there exists a strictly positive equilibrium price vector  $p_0=\{p_i^0\}_{i=1}^n$ solving the set of equations
\begin{eqnarray}\label{10Nato6}
  \sum\limits_{i=1}^n a_{ki}\frac{(1-\pi_i)x_i p_i^0}{ \sum\limits_{s=1}^na_{si}p_s^0 }=(1-\pi_k) x_k, \quad  i=\overline{1,n},
 \end{eqnarray}
and  such that the economy system described by "input - output" model with non negative  non-decomposable productive matrix of direct costs 
$A=||a_{k i}||_{k,i=1}^n$ and gross output vector $x=\{x_k\}_{k=1}^n$   is capable to function in the mode with subsides. The subsides into  the $k$-th industry should  not be  smaller than 
  \begin{eqnarray}\label{subsid3}
x_k p_k^0\left(\frac{\sum\limits_{j=1}^na_{kj}x_j}{x_k}- 1\right),\quad k \in J.
 \end{eqnarray}
 \end{te}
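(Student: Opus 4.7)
The plan is to reduce this to the machinery already developed: existence of the equilibrium price vector comes from Theorem \ref{Nato1}, and then the sign analysis of the added value is read off directly from the balance equation (\ref{10Nato9}) together with the constraints (\ref{subsid1}).

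First, I would argue the existence part exactly as in Theorem \ref{nessuf1}. Substituting the taxation formula (\ref{10Nato5}) into (\ref{10Nato6}) reduces the latter to the identity
\begin{eqnarray*}
\sum\limits_{i=1}^n a_{ki}\frac{\sum\limits_{j=1}^n a_{ij}x_j\, p_i^0}{\sum\limits_{s=1}^n a_{si}p_s^0}=\sum\limits_{j=1}^n a_{kj}x_j,\quad k=\overline{1,n},
\end{eqnarray*}
which is satisfied whenever $p_0=\{p_i^0\}_{i=1}^n$ solves
\begin{eqnarray*}
\frac{p_i^0}{\sum\limits_{s=1}^n a_{si}p_s^0}=\frac{x_i}{\sum\limits_{j=1}^n a_{ij}x_j},\quad i=\overline{1,n}.
\end{eqnarray*}
By Theorem \ref{Nato1} applied with $z=x$, such a strictly positive solution exists, and it is unique up to a positive multiple by Consequence \ref{TinTinPPE1}.

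Next I would compute the gross value added created in the $i$-th industry. From the displayed equation above,
\begin{eqnarray*}
\delta_i^0:=p_i^0-\sum\limits_{s=1}^n a_{si}p_s^0=p_i^0\left(1-\frac{\sum\limits_{j=1}^n a_{ij}x_j}{x_i}\right).
\end{eqnarray*}
Using the balance equation (\ref{10Nato9}) one has $x_i-\sum\limits_{j=1}^n a_{ij}x_j=c_i+e_i-i_i$, hence $\delta_i^0 = p_i^0 (c_i+e_i-i_i)/x_i$. Under the hypotheses (\ref{subsid1}) this gives $\delta_i^0\ge 0$ for $i\in I$ and $\delta_i^0<0$ for $i\in J$. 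Thus, by Definition \ref{kissTk1}, the industries indexed by $J$ require subsidies, which is exactly the ``mode with subsidies'' claim.

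Finally, for the quantitative lower bound, the minimal subsidy to the $k$-th industry must compensate the negative value added times the gross output, namely $-x_k\delta_k^0$. Substituting the expression for $\delta_k^0$ gives
\begin{eqnarray*}
-x_k\delta_k^0=x_k p_k^0\left(\frac{\sum\limits_{j=1}^n a_{kj}x_j}{x_k}-1\right),\quad k\in J,
\end{eqnarray*}
which is the bound (\ref{subsid3}). I expect the only subtle point to be verifying that $J$ is a proper subset of $N$ so that the price equation has a sensible economic interpretation; this follows from productivity of $A$, since if $c_k+e_k-i_k<0$ held for every $k$, then $x$ would satisfy $x\le Ax$ componentwise with strict inequality somewhere, contradicting productivity (this is the same contradiction argument used in Theorem \ref{PPT1}). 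The remainder is routine bookkeeping with the formulas already established.
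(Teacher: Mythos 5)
Your proposal is correct and follows essentially the same route as the paper: existence of the strictly positive equilibrium price vector via Theorem \ref{Nato1} applied with $z=x$ (exactly the reduction used in Theorem \ref{nessuf1}), followed by reading off the sign of the added value $\delta_k^0=p_k^0\left(1-\frac{\sum_{j}a_{kj}x_j}{x_k}\right)$ from the balance equation and the constraints (\ref{subsid1}), which yields the subsidy bound (\ref{subsid3}). Your closing remark that $J\neq N$ is a harmless addition (it is already guaranteed by the hypothesis $I\neq\emptyset$ in (\ref{subsid1})), and otherwise your argument matches the paper's proof step for step.
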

 \begin{proof}
Due to Theorem \ref{10Nato8} conditions, there exists strictly positive solution to the set of equations (\ref{10Nato6}) that
satisfies to the set of equations
\begin{eqnarray}\label{10Nato7}
  \frac{ p_i^0}{ \sum\limits_{s=1}^na_{si}p_s^0 }=\frac{ x_i}{\sum\limits_{k=1}^na_{ik}x_k}, \quad  i=\overline{1,n}.
 \end{eqnarray}  
Since the strictly positive vector  $x=\{x_i\}_{i=1}^n$  is a solution to the set of equations (\ref{10Nato9}) we obtain that   
$$ \frac{ x_k}{\sum\limits_{k=1}^na_{k i}x_i}=1+\frac{c_k+e_k -i_k}{\sum\limits_{i=1}^na_{k i}x_i}\geq 1, \quad  k\in I, $$
$$ \frac{ x_k}{\sum\limits_{k=1}^na_{k i}x_i}=1+\frac{c_k+e_k -i_k}{\sum\limits_{i=1}^na_{k i}x_i}<1, \quad  k \in J. $$

 So, the vector $p_0=\{p_i^0\}_{i=1}^n$ is a strictly positive solution of the set of equations (\ref{Nato2}), then we have
 \begin{eqnarray}\label{10Nato13}
 p_i^0- \sum\limits_{s=1}^na_{si} p_s^0=p_i^0\left(1-\frac{\sum\limits_{j=1}^na_{ij}x_j}{x_i}\right)<0,\quad i \in J.
 \end{eqnarray}
From the formula (\ref{10Nato13}) it follows that subsides in the $k$-th industry should not be smaller than
   \begin{eqnarray}\label{subsid4}
x_k p_k^0\left(\frac{\sum\limits_{j=1}^na_{kj}x_j}{x_k}- 1\right),\quad k \in J.
 \end{eqnarray}
 Theorem \ref{10Nato8} is proved.
 \end{proof}

 \begin{ce}\label{me1}  In the mode of sustainable  economic development, the   gross output vector $X=\{X_k\}_{k=1}^n$ in value indicators satisfies the system of equations  
  \begin{eqnarray}\label{me2}
  \sum\limits_{i=1}^n \bar a_{ki} X_i
= \sum\limits_{s=1}^n \bar a_{sk} X_k, \quad  k=\overline{1,n},
 \end{eqnarray}
 where we introduced the denotations $X_k=p_k^0x_k, \ \bar a_{k i}=
 \frac{p_k^0a_{ki}}{p_i^0}, k,i =\overline{1,n}. $ The equilibrium price  vector $p_0=\{p_i^0\}_{i=1}^n $ solves the set of equations (\ref{10Nato7}) and the gross output vector $x=\{x_i\}_{i=1}^n $ solves the set of equations (\ref{10Nato9}).  In this case  the following formulae 
  \begin{eqnarray}\label{9me2}
C_k+  E_k- I_k=(1-\sum\limits_{s=1}^n \bar a_{sk} )X_k, \quad  k=\overline{1,n},
 \end{eqnarray}
 take place.
 \end{ce}
 \begin{proof}
 The set of equations  (\ref{me2}) is a direct consequence of the set of equations  (\ref{10Nato7}).
 Due to Lemma 3 (see \cite{11Gonchar11}), there  exists always the solution to the set of equations (\ref{me2}) relative to the vector $X=\{X_i\}_{i=1}^n. $  From the fact that 
  \begin{eqnarray}\label{9me3}
C_k+  E_k - I_k=X_k-   \sum\limits_{i=1}^n \bar a_{ki} X_i
= X_k(1-\sum\limits_{s=1}^n \bar a_{sk}), \quad  k=\overline{1,n},
 \end{eqnarray} 
we get the needed statement. 
 \end{proof}
 \begin{ce}\label{me3} 
  In the mode of sustainable economic development, the following formulas 
\begin{eqnarray}\label{me4} 
  \Delta_k=(1-\sum\limits_{s=1}^n \bar a_{sk})X_k, \quad
k=\overline{1,n}, 
 \end{eqnarray}  
 are true. The taxation vector $\pi=\{\pi_k\}_{k=1}^n $ is given by the formula
 \begin{eqnarray}\label{me5} 
 \pi_k=1-b\sum\limits_{s=1}^n \bar a_{sk}=1-b\left (1-\frac{\Delta_k}{X_k}\right), \quad  0< b< \min\limits_{1\leq k \leq n}\frac{1}{\sum\limits_{s=1}^n \bar a_{sk}},\quad  k=\overline{1,n}.
  \end{eqnarray} 
 \end{ce}
\begin{proof}
The proof of the formula (\ref{me5}) follows from the Theorem \ref{10Nato8}. Really, from the formula (\ref{10Nato5}) we have
 \begin{eqnarray}\label{me7}
 1-\pi_i=b\frac{\sum\limits_{j=1}^n a_{i j}x_j}{x_i}=b\frac{\sum\limits_{s=1}^n a_{si}p_s^0}{p_i^0}=b \sum\limits_{s=1}^n \bar a_{si}=b\left (1-\frac{\Delta_i}{X_i}\right),   \quad  i=\overline{1,n}.
 \end{eqnarray}
  \end{proof}

\section{Equilibrium states with partial market clearing.}
The previous section completely describes the taxation systems that ensure  the sustainable development of the economic system. According to Definition \ref{Kur1}, there is always a strictly positive equilibrium price vector $p_0=\{p_k^0\}_{k=1}^n$, satisfying the system of equations (\ref{Kur2}), i.e., there is a complete clearing of markets. But when the taxation system does not coincide with the described ones, then the equilibrium vector of prices $p_0=\{p_k^0\}_{k=1}^n$, must satisfy the system of equations and inequalities
\begin{eqnarray}\label{2Cur1}
  \sum\limits_{i=1}^n a_{ki}\frac{(1-\pi_i)x_i p_i^0}{ \sum\limits_{s=1}^na_{si}p_s^0 }\leq (1-\pi_k) x_k, \quad  i=\overline{1,n}.
\end{eqnarray}  
In this case, we say about only a partial clearing of the markets. Our task is to describe all equilibrium states in which only partial market clearing occurs. The latter means that all non-negative solutions of the system of equations and inequalities (\ref{2Cur1}) should be described.

Let us introduce the denotations
$(1-\pi_k)x_k=b_k, \ k=\overline{1,n}$ and a vector $b=\{b_k\}_{k=1}^n$. We assume that the vector $x=\{x_k\}_{k=1}^n$ is a strictly positive solution to the set of equations (\ref{Kur7}) and  tax system $\pi=\{\pi_k\}_{k=1}^n$ is such that $0<\pi_k<1,\ k=\overline{1,n}.$
From this assumptions we obtain that the vector $b=\{b_k\}_{k=1}^n$ is strictly positive.
So, we need to describe all non negative solutions $p=\{p_k\}_{k=1}^n$ of the nonlinear set of equations and inequalities
 $$  \sum\limits_{j=1}^n  a_{ij}\frac{b_j p_j}{\sum\limits_{s=1}^n a_{sj}p_s}=b_i \quad i \in I,     $$
\begin{eqnarray}\label{mytinq1}
   \sum\limits_{j=1}^n a_{ij}\frac{b_j p_j}{\sum\limits_{s=1}^n a_{sj}p_s}<b_i \quad i \in J,     
\end{eqnarray}
where $I$ and $J$ are non empty sets such that $I\cup J=N=\{1,2,\ldots,n\},$\ $I\cap J=\emptyset.$
\begin{lemma}\label{mytinq2}
Let $A=||a_{ij}||_{i,j =1}^n$ be a non negative  non zero  matrix and let $b=\{b_i\}_{=1}^n$ be a strictly positive vector. 
If $p=\{p_i\}_{i=1}^n$ is an  equilibrium   price vector which solves the system of equations and inequalities
 $$  \sum\limits_{j=1}^n  a_{ij}\frac{b_j p_j}{\sum\limits_{s=1}^n a_{sj}p_s}=b_i \quad i \in I,     $$
\begin{eqnarray}\label{mytinq3}
   \sum\limits_{j=1}^n a_{ij}\frac{b_j p_j}{\sum\limits_{s=1}^n a_{sj}p_s}<b_i \quad i \in J,     
\end{eqnarray}
in the set $P=\{p=\{p_i\}_{i=1}^n,\ p_i \geq 0,\ i=\overline{1,n}, \ \sum\limits_{i=1} ^n p_i=1\},$
then $p_i=0, \ i \in J.$ 
 \end{lemma}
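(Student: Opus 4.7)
The plan is to exploit a simple summation identity that forces the strict inequalities on $J$ to become equalities, which is only possible if $p_i = 0$ for every $i \in J$.

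First I would introduce the shorthand $\lambda_j = \sum_{s=1}^n a_{sj} p_s$ and $q_j = b_j p_j / \lambda_j$ so that the system reads $\sum_j a_{ij} q_j = b_i$ for $i \in I$ and $\sum_j a_{ij} q_j < b_i$ for $i \in J$. The crucial identity is the tautology $b_j p_j = q_j \lambda_j = q_j \sum_{i=1}^n a_{ij} p_i$, obtained just from the definition of $q_j$. Summing this identity over $j$ and swapping the order of summation gives
\begin{equation*}
\sum_{j=1}^n b_j p_j \;=\; \sum_{j=1}^n q_j \sum_{i=1}^n a_{ij} p_i \;=\; \sum_{i=1}^n p_i \sum_{j=1}^n a_{ij} q_j.
\end{equation*}

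Next I would split the right-hand side according to $I$ and $J$. On $I$ the inner sum equals $b_i$ exactly, while on $J$ the inner sum is strictly less than $b_i$, so multiplying by $p_i \ge 0$ and summing yields
\begin{equation*}
\sum_{i=1}^n p_i \sum_{j=1}^n a_{ij} q_j \;\le\; \sum_{i \in I} p_i b_i + \sum_{i \in J} p_i b_i \;=\; \sum_{i=1}^n p_i b_i,
\end{equation*}
with strict inequality as soon as $p_i > 0$ for some $i \in J$. Since the left-hand side is just $\sum_j b_j p_j = \sum_i p_i b_i$, strict inequality is impossible, and hence $p_i = 0$ for every $i \in J$.

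The only technical obstacle I foresee is the well-definedness of $q_j$ when $\lambda_j = 0$: if $\lambda_j = 0$ and $p_j > 0$, the ratio $b_j p_j/\lambda_j$ appearing in (\ref{mytinq3}) is ill-defined. I would address this by adopting the natural convention that the summand $a_{ij} b_j p_j / \lambda_j$ is read as $0$ whenever $\lambda_j = 0$ (which forces $a_{ij} p_j = 0$ anyway because $\lambda_j \ge a_{ij} p_j$ would fail only if $a_{ij} = 0$, in which case the summand vanishes by itself). Under this convention the identity $q_j \lambda_j = b_j p_j$ still holds for every $j$ that contributes nontrivially to the sums, and the argument above goes through verbatim.
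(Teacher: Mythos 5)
Your argument is correct and is essentially the paper's own proof: multiply the $i$-th relation by $p_i$, sum over $i$, and swap the order of summation to get $\sum_{j} b_j p_j < \sum_{i} b_i p_i$ under the assumption that $p_k>0$ for some $k\in J$, a contradiction since $b$ is strictly positive. The only difference is your closing digression on the degenerate case $\lambda_j=0$, which the paper tacitly excludes by assuming the system is well defined; note that its parenthetical justification is not quite right ($\lambda_j=\sum_{s} a_{sj}p_s=0$ does not force $a_{ij}p_j=0$, since the numerator involves $p_j$ while $\lambda_j$ involves the $p_s$ weighted by column $j$), but this side remark does not affect the main argument.
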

\begin{proof}
Suppose that the vector $p=\{p_i\}_{i=1}^n$ is a solution of the system of equations and inequalities (\ref{mytinq3}), belonging to the set $P.$ Let us show that $p_i=0, \ i \in J.$ 
We  lead the proof from the opposite.
Let at least one component $p_k, k \in J,$ of the vector $p$ be strictly positive.
Then, multiplying by $p_i, i=\overline{1,n},$ the $i$-th equation or inequality and summing up the left and right parts, respectively, we obtain the inequality
$ \sum\limits_{j=1}^n b_j p_j< \sum\limits_{i=1}^n b_i p_i. $
Since  the vector $b$ is a strictly positive one, this inequality is impossible, because $ \sum\limits_{j=1}^n b_j p_j>0.$ Therefore, our assumption is not correct, and so $p_i=0, \ i \in J. $
\end{proof}

Suppose that $I$ be a nonempty  subset of
  indices $k \in N_0 =[1,2,\ldots,n]$. Let us consider the  system of equations 
\begin{eqnarray}\label{mykt17}
\sum\limits_{i=1}^l c_{ki} z_i =b_k, \quad k\in I,
\end{eqnarray}
and inequalities
\begin{eqnarray}\label{mykt18}
\sum\limits_{i=1}^l c_{ki} z_i  < b_k, \quad k\in J,
\end{eqnarray}
where $C=||c_{k i}||_{k=1,i=1}^{n,l}$
is a non negative non zero  matrix.
 First, to describe all the non-negative solutions of the set of equations and inequalities (\ref{mytinq3}),  we give the complete description of the non-negative  solutions of the system of equations and inequalities (\ref{mykt17}), (\ref{mykt18}).  

We denote by $c_i=\{c_{ki}\}_{k=1}^n, \ i=\overline{1,l},$ the $i$-th column of the matrix $C=||c_{k i}||_{k=1,i=1}^{n,l}.$
Let us consider the numbers $d_i=\min\limits_{1\leq k \leq n}\frac{b_k}{c_{ki}}, \ i=\overline{1,l}.$

Next Theorem \ref{TVYA} generalizes Theorem 9 from \cite{11Gonchar11}.
\begin{te}\label{TVYA} Let the strictly positive vector $b$ not belong to the cone formed by the column vectors $c_i=\{c_{ki}\}_{k=1}^n, \ i=\overline{ 1,l},$ of the non-negative  matrix $C$ such that $\sum\limits_{i=1}^l c_{ki}>0, k=\overline{1,n},  \sum\limits_{k=1}^n c_{ki}>0, i=\overline{1,l}.$  Any non-negative solution of the system of equations (\ref{mykt17}) and inequalities (\ref{mykt18}) is given by the formula
$$ z=\{c(\alpha)\alpha_i d_i\}_{i=1}^l,$$
where $\alpha=\{\alpha_i \}_{i=1}^l \in Q=
\{\alpha=\{\alpha_i \}_{i=1}^l, \ \alpha_i \geq 0, \sum\limits_{i=1}^l\alpha_i=1\},$
$$c(\alpha)=\min\limits_{1\leq k \leq n}
\frac{b_k}{[\sum\limits_{i=1}^l\alpha_i d_i c_i]_k}\geq 1.$$
The function $ c(\alpha)$ is bounded and continuous on the set $Q.$
\end{te}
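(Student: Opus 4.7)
The approach is an explicit decomposition: for any non-negative $z$ satisfying (\ref{mykt17})--(\ref{mykt18}) I will construct $\alpha\in Q$ and a scalar $c>0$ with $z_i=c\alpha_i d_i$ and then verify that $c$ coincides with the value $c(\alpha)$ defined in the statement. First I note that $z\neq 0$, since for any $k\in I$ the equation $\sum_i c_{ki}z_i=b_k>0$ forces at least one $z_i$ to be positive. I would then set
\[
   c \ :=\ \sum_{i=1}^l \frac{z_i}{d_i}\ >\ 0,\qquad \alpha_i \ :=\ \frac{z_i}{c\,d_i},\quad i=\overline{1,l}.
\]
Clearly $\alpha_i\ge 0$, $\sum_i\alpha_i=1$, so $\alpha\in Q$, and by construction $z_i=c\,\alpha_i d_i$.

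The core step is to check that the constructed scalar $c$ is equal to the prescribed $c(\alpha)=\min_k b_k/[\sum_i\alpha_i d_i c_i]_k$. Substituting the definition of $\alpha$ I get
\[
\Bigl[\sum_{i=1}^l\alpha_i d_i c_i\Bigr]_k \ =\ \sum_{i=1}^l\alpha_i d_i c_{ki}\ =\ \frac{1}{c}\sum_{i=1}^l c_{ki}z_i,
\]
so that $b_k/[\sum_i\alpha_i d_i c_i]_k = c\,b_k/\sum_i c_{ki}z_i$. For $k\in I$ this ratio equals $c$ (by (\ref{mykt17})), while for $k\in J$ the ratio is strictly greater than $c$ by (\ref{mykt18}) — or formally $+\infty$ if $\sum_i c_{ki}z_i=0$. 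Therefore the minimum over $k$ is attained on $I$ and equals $c$, giving $c=c(\alpha)$. The inequality $c(\alpha)\ge 1$ is immediate from the definition of $d_i$: since $d_i\le b_k/c_{ki}$ whenever $c_{ki}>0$, one has $d_i c_{ki}\le b_k$ for every $k,i$, so the convex combination satisfies $[\sum_i\alpha_i d_i c_i]_k\le b_k$ for every $k$.

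Finally, to obtain boundedness and continuity of $c(\alpha)$ on $Q$, I would reformulate it as
\[
c(\alpha)\ =\ \frac{1}{\max_{1\le k\le n} G_k(\alpha)},\qquad G_k(\alpha)\ :=\ \frac{[\sum_i\alpha_i d_i c_i]_k}{b_k}.
\]
Each $G_k$ is linear (hence continuous) in $\alpha$, so $\max_k G_k$ is continuous. To ensure that this maximum stays bounded away from zero on $Q$, I compute
\[
\sum_{k=1}^n b_k G_k(\alpha)\ =\ \sum_{i=1}^l\alpha_i d_i\sum_{k=1}^n c_{ki}\ \ge\ m\sum_{i=1}^l\alpha_i\ =\ m,
\]
where $m=\min_i\bigl(d_i\sum_k c_{ki}\bigr)>0$ by the hypothesis that all column sums of $C$ are positive. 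Pigeonholing then forces $\max_k G_k(\alpha)\ge m/(n\max_k b_k)>0$ uniformly in $\alpha\in Q$, which yields both the continuity and the uniform upper bound on $c(\alpha)$.

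The main obstacle I expect is the bookkeeping at parameter values $\alpha$ where some $[\sum_i\alpha_i d_i c_i]_k$ vanishes: there the ratio defining $c(\alpha)$ becomes infinite for that particular $k$, and one must be careful that the minimum is still attained at some finite ratio. The row- and column-sum positivity hypotheses on $C$ are what prevent all denominators from collapsing simultaneously, and the reformulation $c(\alpha)=1/\max_k G_k(\alpha)$ makes this transparent.
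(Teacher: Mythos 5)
Your proposal is correct, and the core representation argument is essentially the paper's: you normalize $z_i/d_i$ to get $\alpha\in Q$ and a scale factor $c=\sum_i z_i/d_i$, then identify $c$ with $c(\alpha)$ by observing that the ratio $b_k/[Cz]_k$ equals $1$ exactly on the nonempty index set $I$ and exceeds $1$ on $J$, so the minimum defining $c(\alpha)$ is attained on $I$; the bound $c(\alpha)\ge 1$ via $d_ic_{ki}\le b_k$ is also the paper's argument verbatim. Where you genuinely diverge is in the boundedness and continuity of $c(\alpha)$. The paper first bounds each component $z_i^0$ using the row-positivity of $C$ (for every $i$ there is a $k$ with $c_{ki}>0$, giving $z_i^0\le \max_k b_k/\min_{c_{ki}>0}c_{ki}$), sums to bound $c(\alpha)$, and then proves continuity by a somewhat laborious $\varepsilon$-truncation: it introduces auxiliary functions $V_k^{\varepsilon}(\alpha)$ that cap the ratios on the sets where the denominators fall below $\varepsilon$, and shows $\min_k V_k^{\varepsilon}=c(\alpha)$. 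Your reformulation $c(\alpha)=1/\max_k G_k(\alpha)$ with $G_k$ linear, combined with the uniform lower bound $\max_k G_k(\alpha)\ge m/(n\max_k b_k)>0$ obtained from the column-sum positivity and pigeonholing, delivers both boundedness and continuity in one stroke and handles the vanishing-denominator issue transparently; it is cleaner and arguably preferable. The only cosmetic gap is that you do not record the converse direction (that every $\alpha\in Q$ yields a solution of the system for the set $I=\{k:[Cz]_k=b_k\}$), which the paper mentions in passing; the theorem as stated only claims the forward representation, so this is not a defect.
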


\begin{proof}
Let $z_0=\{z_i^0\}_{i=1}^l$ be a certain vector that is a solution of the system of equations (\ref{mykt17}) and inequalities (\ref{mykt18}).
Let's denote
$$\alpha_i=\frac{\frac{z_i^0}{d_i}}{\sum\limits_{j=1}^l\frac{z_j^0}{d_j} },\quad i=\overline{1 ,l}.$$
Then, we have
$$ Cz_0=\sum\limits_{i=1}^l c_iz_i^0=\sum\limits_{j=1}^l\frac{z_j^0}{d_j} \sum\limits_{i=1}^l \alpha_i d_i c_i.$$
Because of
$$ \min\limits_{1\leq k \leq n} \frac{b_k}{[Cz_0]_k }=1,$$
we get
$$\sum\limits_{j=1}^l\frac{z_j^0}{d_j}= \min\limits_{1\leq k \leq n} \frac{b_k}{[\sum\limits_{i =1}^l \alpha_i d_i c_i]_k}.$$
It is obvious that, conversely, every vector $\alpha=\{\alpha_i \}_{i=1}^l \in Q$ corresponds to
solution of the system of equations (\ref{mykt17}) and inequalities (\ref{mykt18}), which is given by the formula
$$ z=\{a(\alpha)\alpha_i d_i\}_{i=1}^l,$$
where
$$a(\alpha)=\min\limits_{1\leq k \leq n}
\frac{b_k}{[\sum\limits_{i=1}^l\alpha_i d_i c_i]_k}.$$

Let us establish that $c(\alpha) \geq 1.$
It is obvious that $c_i d_i\leq b.$ Multiplying by $\delta_i\geq 0$ the left and right parts of the last inequality and summing over $i$ and assuming that $\sum\limits_{i=1}^l\delta_i >0$ we will get
$$\frac{\sum\limits_{i=1}^l\delta_i d_i c_i}{\sum\limits_{i=1}^l\delta_i }\leq b.$$
Denoting $\alpha_i=\frac{\delta_i}{\sum\limits_{i=1}^n\delta_i}, \ i=\overline{1,l},$ we get what we need.
It follows from the assumptions relative to matrix $C$ that for every index $1\leq i \leq l$ there exists an index $k$ such that
$$ \sum\limits_{j=1}^l c_{kj}z_j^0 \geq c_{ki}z_i^0, $$
where $c_{ki}>0.$ Hence
$$ z_i^0 \leq \frac{b_k}{c_{ki}}\leq \frac{\max\limits_{1\leq k\leq n}b_k}{\min\limits_{c_{ki}>0 }c_{ki}}=c_0<\infty.$$
Due to the arbitrariness of the solution $z_0=\{z_i^0\}_{i=1}^l$ of the system of equations (\ref{mykt17}) and inequalities (\ref{mykt18}), we obtain
$$ c(\alpha)\alpha_i d_i \leq c_0.$$
Or
$$ c(\alpha)\alpha_i \leq \frac{ c_0}{d_i}.$$
After summing over the index $i$, we get
$$ c(\alpha) \leq \sum\limits_{i=1}^l\frac{ c_0}{d_i}.$$
The boundedness of $ c(\alpha)$ is established.

Let's prove the continuity of $ c(\alpha)$.
Let us consider the functions $\sum\limits_{i=1}^l c_{ki}\alpha_i d_i, \ k=\overline{1,n}$ Every of these  function is continuous on the set $P.$ Since  $c(\alpha)$ is bounded let us denote $B=\sup\limits_{\alpha \in P}c(\alpha)<\infty.$
For sufficiently small $\varepsilon >0$
that satisfies inequalities $\frac{b_k}{\varepsilon}>B, \ k=\overline{1,n},$ let us introduce the sets $$C_k^{\varepsilon}=\left\{\alpha \in P, \ \sum\limits_{i=1}^l c_{ki}\alpha_i d_i\leq \varepsilon \right\},\quad k=\overline{1,n}.$$
If the  set $C_k^{\varepsilon}$ is nonempty
we introduce the function
 \begin{eqnarray}\label{won10}
 V_k^{\varepsilon}(\alpha)=
 \left\{\begin{array}{l l} \frac{b_k}{\sum\limits_{i= 1}^lc_{ki}\alpha_i d_i}, & \mbox{if} \quad \alpha \in P \setminus C_k^{\varepsilon},\\
\frac{b_k}{\varepsilon}, & \mbox{if} \quad \alpha \in  C_k^{\varepsilon}.
\end{array} \right.
\end{eqnarray}
If the set $C_k^{\varepsilon}$ is empty one
we  put 
$$ V_k^{\varepsilon}(\alpha)= \frac{b_k}{\sum\limits_{i= 1}^l c_{ki}\alpha_i d_i}.$$
The functions $ V_k^{\varepsilon}(\alpha)$  are continuous on the set $P$ and the equality 
$$ \min\limits_{1\leq k\leq n}V_k^{\varepsilon}(\alpha)=c(\alpha)$$
is true. Really, from the inequalities
$$ \frac{b_k}{\sum\limits_{i= 1}^l c_{ki}\alpha_i d_i}\geq   V_k^{\varepsilon}(\alpha), \quad k=\overline{1,n}, $$
it follows that
$$ \min\limits_{1\leq k\leq n}V_k^{\varepsilon}(\alpha)\leq c(\alpha).$$ 
The inverse inequality follows from the note that if for a certain point 
$\alpha \in P$ we obtain that 
$$ \min\limits_{1\leq k\leq n}V_k^{\varepsilon}(\alpha)< c(\alpha).$$ 
From the definition of $ V_k^{\varepsilon}(\alpha)$ it means that  $\min\limits_{1\leq k\leq n}V_k^{\varepsilon}(\alpha)=\frac{ b_{k_0}}{\varepsilon}>B$ for a certain $k_0.$ But this is impossible. The function $ \min\limits_{1\leq k\leq n}V_k^{\varepsilon}(\alpha)$ is continuous on $P.$

Theorem \ref{TVYA} is proved.
\end{proof}
Let us denote all non negative solutions of the set of inequalities
 \begin{eqnarray}\label{mykt15}
\sum\limits_{i=1}^l c_{ki} z_i \leq b_i,\quad k=\overline{1,n}
\end{eqnarray}
by $Z.$

\begin{prope}\label{g1}
 In the set of solutions $Z_0 = \{z_0=\{z_i^0\}_{i=1}^l=\{a(\alpha)\alpha_i d_i\}_{i=1}^l, \ \alpha \in Q\}$ of the system of equations (\ref{mykt17}) and inequalities (\ref{mykt18}) there exists a minimum of the function
$$W(\alpha)=\sum\limits_{k=1}^n [b_k- c(\alpha)\sum\limits_{i=1}^lc_{ki}\alpha_i d_i]^2.$$
This minimum is global on the set of all solutions of the system of inequalities (\ref{mykt15}),
i.e
$$\min\limits_{\alpha \in Q}W(\alpha)=\min\limits_{z \in Z} \sum\limits_{k=1}^n [b_k- \sum\limits_{i= 1}^lc_{ki}z_i]^2,$$
where $Z$ is the set of all non-negative solutions of the system of inequalities (\ref{mykt15}).
\end{prope}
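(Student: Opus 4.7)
The plan is to establish two things separately: existence of the minimum of $W(\alpha)$ on $Q$, and the identification of this minimum with the global minimum over the full feasible set $Z$.

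First, I would verify existence by the Weierstrass theorem. The simplex $Q$ is compact. Theorem \ref{TVYA} established that $c(\alpha)$ is bounded and continuous on $Q$; each $\beta_k(\alpha):=\sum_{i=1}^l c_{ki}\alpha_i d_i$ is a linear function of $\alpha$, hence continuous; so $W(\alpha)=\sum_{k=1}^n\bigl(b_k-c(\alpha)\beta_k(\alpha)\bigr)^2$ is continuous on $Q$ and the minimum is attained.

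Second, I would re-parameterize $Z$ via the mapping used in Theorem \ref{TVYA}. Given any nonzero $z=\{z_i\}_{i=1}^l\in Z$, set $\alpha_i=(z_i/d_i)/\sum_{j=1}^l(z_j/d_j)$ and $t=\sum_{j=1}^l(z_j/d_j)$. Then $\alpha\in Q$, $z_i=t\alpha_i d_i$, and the inequalities $\sum_i c_{ki}z_i\le b_k$ become $t\,\beta_k(\alpha)\le b_k$ for all $k$, i.e.\ $0\le t\le c(\alpha)$. Conversely, every such pair $(\alpha,t)$ yields a $z\in Z$. (The case $z=0$ corresponds to $t=0$.) Thus
\[
\min_{z\in Z}\sum_{k=1}^n\Bigl(b_k-\sum_{i=1}^l c_{ki}z_i\Bigr)^2=\min_{\alpha\in Q}\ \min_{0\le t\le c(\alpha)}\sum_{k=1}^n\bigl(b_k-t\beta_k(\alpha)\bigr)^2.
\]

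Third, and this is the crux, I would show that the inner minimum is attained at the upper endpoint $t=c(\alpha)$, giving exactly $W(\alpha)$. Fix $\alpha\in Q$ and set $f(t)=\sum_k(b_k-t\beta_k)^2$. If all $\beta_k(\alpha)=0$ the claim is trivial; otherwise $f$ is a convex quadratic with unconstrained minimizer $t^{\ast}=\bigl(\sum_k b_k\beta_k\bigr)/\bigl(\sum_k \beta_k^2\bigr)$. The key observation is the identity
\[
t^{\ast}=\sum_{k:\beta_k>0}\frac{b_k}{\beta_k}\cdot\frac{\beta_k^2}{\sum_{j}\beta_j^2},
\]
which exhibits $t^{\ast}$ as a convex combination of the ratios $b_k/\beta_k$, hence $t^{\ast}\ge\min_k b_k/\beta_k(\alpha)=c(\alpha)$. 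Since $f$ is decreasing on $(-\infty,t^{\ast}]$, its restriction to $[0,c(\alpha)]$ is minimized at $t=c(\alpha)$.

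Combining these steps yields
\[
\min_{z\in Z}\sum_{k=1}^n\Bigl(b_k-\sum_{i=1}^l c_{ki}z_i\Bigr)^2=\min_{\alpha\in Q}f\bigl(c(\alpha)\bigr)=\min_{\alpha\in Q}W(\alpha),
\]
and the left-hand side is attained at some $z_0=\{c(\alpha^{\ast})\alpha_i^{\ast}d_i\}_{i=1}^l\in Z_0$. The main obstacle is the third step: one must recognize that the constrained quadratic minimum in $t$ on $[0,c(\alpha)]$ sits at the boundary $c(\alpha)$, via the weighted-average representation of $t^{\ast}$. The rest is a compactness argument and a routine parameterization of $Z$ already prepared by Theorem \ref{TVYA}.
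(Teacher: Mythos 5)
Your proposal is correct and follows essentially the same route as the paper: Weierstrass on the compact simplex $Q$ using the continuity and boundedness of $c(\alpha)$ from Theorem \ref{TVYA}, the normalization $\alpha_i=(z_i/d_i)/\sum_{j}(z_j/d_j)$ with $\sum_{j}z_j/d_j\le c(\alpha)$, and the reverse inequality from the inclusion $Z\supset Z_0$. The only divergence is at the crux inequality $\sum_k\bigl[b_k-\sum_i c_{ki}z_i\bigr]^2\ge W(\alpha)$, which the paper asserts in one line (implicitly because each term $(b_k-t\beta_k)^2$ satisfies $0\le t\beta_k\le c(\alpha)\beta_k\le b_k$ and is therefore termwise decreasing in $t$ on $[0,c(\alpha)]$), whereas you justify it by locating the vertex $t^{\ast}\ge c(\alpha)$ of the convex quadratic via the weighted-average identity --- a valid, slightly heavier argument that also handles the $z=0$ case the paper's normalization leaves undefined.
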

\begin{proof}
The function $W(\alpha)$ is continuous on the closed bounded set $Q,$ because so is the function $c(\alpha)$ due to its continuity. According to the Weierstrass Theorem, there exists a minimum of the function
$W(\alpha).$
For any solution $z=\{z_i\}_{i=1}^l \in Z$
let's denote
$$\alpha_i=\frac{\frac{z_i}{d_i}}{\sum\limits_{j=1}^n\frac{z_j}{d_j} }\quad i=\overline{1,l}. $$
Then,
$$ Cz=\sum\limits_{i=1}^l c_iz_i=\sum\limits_{j=1}^l\frac{z_j}{d_j} \sum\limits_{i=1}^l \alpha_i d_i c_i \leq b.$$
From here,
$$ \sum\limits_{j=1}^l\frac{z_j}{d_j} \leq \min\limits_{1\leq k \leq n}\frac{b_k}{[\sum\limits_{i= 1}^l \alpha_i d_i c_i]_k}=c(\alpha).$$
Therefore,
$$ \sum\limits_{k=1}^n [b_k- \sum\limits_{i=1}^l c_{ki}z_i]^2 \geq \sum\limits_{k=1}^n [b_k- c(\alpha)\sum\limits_{i=1}^lc_{ki}\alpha_i d_i]^2\geq $$
$$\min\limits_{\alpha \in Q}\sum\limits_{k=1}^n [b_k- c(\alpha)\sum\limits_{i=1}^lc_{ki}\alpha_i d_i] ^2.$$
Taking the minimum over $z \in Z$, we have
$$\min\limits_{z \in Z} \sum\limits_{k=1}^n [b_k- \sum\limits_{i=1}^l c_{ki}z_i]^2\geq \min\limits_ {\alpha \in Q}\sum\limits_{k=1}^n [b_k- c(\alpha)\sum\limits_{i=1}^l c_{ki}\alpha_i d_i]^2.$$
The inverse inequality is obvious due to the inclusion $Z \supset Z_0, $ where $Z_0$ is the set of solutions of the system of equations (\ref{mykt17}) and inequalities (\ref{mykt18}).
Proposition \ref{g1} is proved.
\end{proof}

\begin{te}\label{TtsyVtsja5}
Let $A=||a_{ij}||_{ij=1}^n$ be a non negative nonzero matrix. The sufficient condition for the existence of a solution $p=\{p_i\}_{i=1}^n$ of the system of equations 
$$ \sum\limits_{i=1}^n a_{ki}\frac{\bar b_i p_i}{\sum\limits_{s=1}^n a_{si}p_s}=\bar b_k \quad k =\overline{1,n}, $$
in the set $P=\{p=\{p_i\}_{i=1}^n, \ p_i\geq 0,\ i=\overline{1,n},\  \sum\limits_{i=1}^n p_i=1\}$  is the existence of a solution to the system of equations
$$ z_i=\frac{\bar b_i p_i^0}{ \sum\limits_{s=1}^n a_{si}p_s^0}, \quad i=\overline{1,n}, $$
in the set $P$ for a certain non negative nonzero vector $z=\{z_i\}_{i=1}^n$ and $\bar b=Az=\{\bar b_i\}_{i=1} ^n.$
\end{te}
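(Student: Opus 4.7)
The plan is to verify the conclusion by direct substitution: the hypothesised existence of $p^0\in P$ and a non-negative nonzero $z$ satisfying $z_i=\bar b_i p_i^0/\sum_s a_{si}p_s^0$ together with $\bar b=Az$ is essentially a restatement of the first system after one renaming. So the argument will be algebraic rather than topological, and no fixed-point theorem is needed at this stage; the real analytical content has already been shifted into the assumption that $p^0$ solves the ``auxiliary'' system.

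First, I would fix $p^0\in P$ and a vector $z=\{z_i\}_{i=1}^n$ with $z_i\ge 0$, $z\neq 0$, satisfying $z_i=\bar b_i p_i^0/\sum_{s=1}^n a_{si}p_s^0$ for $i=\overline{1,n}$ and $\bar b_k=\sum_{i=1}^n a_{ki}z_i$, $k=\overline{1,n}$. Next I would substitute the first identity into the right-hand side of $\bar b=Az$. For each $k$ this yields
\begin{equation*}
\bar b_k \;=\; \sum_{i=1}^n a_{ki}z_i \;=\; \sum_{i=1}^n a_{ki}\,\frac{\bar b_i p_i^0}{\sum_{s=1}^n a_{si}p_s^0},
\end{equation*}
which is exactly the first system evaluated at $p=p^0$. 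Since $p^0$ was chosen in $P$ by hypothesis, this is the desired solution.

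The only minor issue I would need to address is the well-posedness of the quotient $\bar b_i p_i^0/\sum_{s=1}^n a_{si}p_s^0$: if $\sum_s a_{si}p_s^0=0$ for some index $i$, then either $\bar b_i p_i^0=0$ (in which case the relevant term $a_{ki}z_i$ contributes nothing to $\bar b_k$ and the convention $0/0=0$ can be adopted, as is implicit throughout the paper), or the corresponding equation in the second system would be inconsistent, so no such $p^0$ could have been assumed. I would note this briefly and move on. This is the only potential subtlety, and it is bookkeeping rather than a genuine obstacle; the proof itself reduces to the one-line substitution above.
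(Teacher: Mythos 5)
Your proof is correct and is precisely the ``obvious'' direct substitution that the paper itself invokes (the paper's proof of Theorem \ref{TtsyVtsja5} consists of the single sentence ``The proof is obvious''). Your additional remark on the well-posedness of the quotient when $\sum_s a_{si}p_s^0=0$ is a sensible piece of bookkeeping that the paper omits.
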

\begin{proof}
The proof is obvious.
\end{proof}
 \begin{te}\label{2TtsyVtsja4}
Let $A=||a_{ij}||_{ij=1}^n$ be a non negative non zero matrix and $z=\{z_i\}_{i=1}^n$ be a non negative vector such that the vector   $\bar b=Az=\{\bar b_i\}_{i=1}^n$ is strictly positive. Then there exists a solution to the set of equations
$$ z_i=\frac{\bar b_i p_i^0}{ \sum\limits_{s=1}^n a_{si}p_s^0}, \quad i=\overline{1,n}, $$ 
in the set $P=\{p=\{p_i\}_{i=1}^n,\ p_i \geq 0,\ i=\overline{1,n}, \ \sum\limits_{i=1} ^n p_i=1\}.$ 
\end{te}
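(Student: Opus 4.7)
The strategy mirrors the Brouwer fixed-point construction used in Theorem \ref{Nato1}, but must be adapted to the weaker hypotheses here: $A$ is not assumed indecomposable and $z$ is only non-negative (with the non-degeneracy provided instead by $\bar b = Az > 0$). First I would rewrite the target equation in its equivalent multiplicative form $z_i \sum_{s=1}^n a_{si} p_s^0 = \bar b_i p_i^0$ and set $y_i = z_i/\bar b_i \geq 0$, which is well-defined precisely because $\bar b_i > 0$. With this substitution the system becomes the eigen-type relation $p_i^0 = y_i \sum_s a_{si} p_s^0$, $i=\overline{1,n}$.

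Next I would apply Brouwer's theorem on $P$ to the same continuous self-map used in Theorem \ref{Nato1}, namely $T(p)_k = (p_k + y_k \sum_s a_{sk} p_s)/(1 + \sum_k y_k \sum_s a_{sk} p_s)$; its denominator is $\geq 1$ and its components sum to $1$, so $T\colon P \to P$ is a continuous self-map and a fixed point $p^0 \in P$ exists. At $p^0$ one obtains $\lambda p_k^0 = y_k \sum_s a_{sk} p_s^0$ for all $k$, where $\lambda = \sum_k y_k \sum_s a_{sk} p_s^0$. To pin down $\lambda = 1$ I would multiply this identity by $\bar b_k$ and sum over $k$: since $\bar b_k y_k = z_k$ and $\sum_k a_{sk} z_k = \bar b_s$, the right-hand side collapses to $\sum_s \bar b_s p_s^0$, which equals $\lambda \sum_k \bar b_k p_k^0$. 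Because $\bar b$ is strictly positive and $\sum_k p_k^0 = 1$, one has $\sum_k \bar b_k p_k^0 \geq \min_k \bar b_k > 0$, forcing $\lambda = 1$.

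It then remains to verify that $p^0$ solves the original system as stated. For indices $k$ with $\sum_s a_{sk} p_s^0 > 0$ the relation $z_k = \bar b_k p_k^0 / \sum_s a_{sk} p_s^0$ follows by division; for the remaining indices, $\bar b_k p_k^0 = z_k \cdot 0 = 0$ combined with $\bar b_k > 0$ forces $p_k^0 = 0$, so the multiplicative form of the equation is satisfied trivially. The only delicate point is the derivation of $\lambda = 1$ without indecomposability of $A$: the strictly positive dual weights $\bar b_k$ substitute for the vector $z_k/y_k$ used in Theorem \ref{Nato1}, and the strict positivity of $\bar b$ is invoked exactly once, in bounding $\sum_k \bar b_k p_k^0$ away from zero.
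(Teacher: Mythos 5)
Your proposal is correct and follows essentially the same route as the paper: the same Brouwer fixed-point map $T(p)_k=(p_k+y_k\sum_s a_{sk}p_s)/(1+\sum_k y_k\sum_s a_{sk}p_s)$ with $y_k=z_k/\bar b_k$, the same eigenvalue relation, and the same argument that multiplying by $\bar b_k$ and summing forces $\lambda=1$ because $\sum_k \bar b_k p_k^0>0$. Your extra remark handling indices with $\sum_s a_{sk}p_s^0=0$ is a small but welcome refinement that the paper's proof leaves implicit.
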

\begin{proof}
Let us consider the nonlinear set of equations
\begin{eqnarray}\label{TPK1}
\frac{p_k+y_k \sum\limits_{s=1}^n a_{sk}p_s}{1+\sum\limits_{k=1}^n y_k \sum\limits_{s=1}^n a_{sk}p_s}=p_k, \quad k=\overline{1,n},
\end{eqnarray}
 where we denoted by
 $$y_k=\frac{z_k}{\sum\limits_{i=1}^n a_{ki}z_i}, \quad k=\overline{1,n}. $$
There exists a solution of this set of equations in the set $P,$ since the left part of this set of equation is a map that maps the set $P$ into itself and is continuous on it. Due to Brouwer Theorem \cite{Nirenberg} there exists a solution $p_0=\{p_i^0\}_{i=1}^n$ of the set of equations (\ref{TPK1}) in the set $P.$
From the set of equations (\ref{TPK1}) it follows that $p_0=\{p_i^0\}_{i=1}^n$
is a solution of the set of equations
\begin{eqnarray}\label{TPK2}
y_k \sum\limits_{s=1}^n a_{sk}p_s^0=\lambda p_k^0,\quad k=\overline{1,n},
\end{eqnarray}
where $\lambda=\sum\limits_{k=1}^n y_k \sum\limits_{s=1}^n a_{sk}p_s^0.$
Or,
\begin{eqnarray}\label{TPK3}
z_k \sum\limits_{s=1}^n a_{sk}p_s^0=\lambda p_k^0\sum\limits_{i=1}^n a_{ki}z_i,\quad k=\overline{1,n},
\end{eqnarray}
Summing up the left and right hand sides over $k$ from $1$ to $n$ of the equalities (\ref{TPK3}) we get
\begin{eqnarray}\label{TPK3}
\sum\limits_{s=1}^n \bar b_s p_s^0=\lambda \sum\limits_{k=1}^n p_k^0\bar b_k,\quad k=\overline{1,n}.
\end{eqnarray}
Since $\bar b_k>0,\ k=\overline{1,n},$ and  the vector 
$p_0=\{p_i^0\}_{i=1}^n$ is non zero,   we obtain
$\sum\limits_{s=1}^n \bar b_s p_s^0>0.$ From here we get $\lambda=1.$  Theorem \ref{2TtsyVtsja4} is proved.
\end{proof}

\begin{de}\label{kolja2}
Let $A=||a_{ij}||_{i,j =1}^n$ be a non negative non zero  matrix, and let  $b$ be a strictly positive vector, which does not belong to the cone formed by the column vectors of the matrix $A. $ We  say that the vector 
$z=\{z_i\}_{i=1}^n,$ being  a solution of the system of equations and inequalities
\begin{eqnarray}\label{100kolja6}
 \sum\limits_{j=1}^n a_{ij}z_j=b_i, \quad i \in I,
\end{eqnarray}
\begin{eqnarray}\label{100kolja7}
 \sum\limits_{j=1}^n a_{ij}z_j<b_i, \quad i \in J,
\end{eqnarray}
for a certain nonempty set  $I$, corresponds to the equilibrium price  vector,  which is solution of  the system of equations 
\begin{eqnarray}\label{100kolja8}
 \sum\limits_{i=1}^n a_{ki}\frac{\bar b_i p_i}{\sum\limits_{s=1}^n a_{si}p_s}=\bar b_k \quad k=\overline{1,n},
\end{eqnarray}
in the set $P=\{p=\{p_i\}_{i=1}^n,\ p_i \geq 0,\ i=\overline{1,n}, \ \sum\limits_{i=1} ^n p_i=1\},$ where $\bar b=Az=\{\bar b_i\}_{i=1}^n.$
\end{de}

 Theorem \ref{myktina19} is the basis for the determining of the equilibrium price vector in the case of partial clearing of markets.

\begin{te}\label{myktina19}
Let $A=||a_{ij}||_{i,j =1}^n$ be a non negative  non zero  matrix, and let $b=\{b_i\}_{=1}^n$ be a strictly positive vector. 
The equilibrium   price vector $p=\{p_i\}_{i=1}^n$,  being a solution  of the system of equations and inequalities
 $$  \sum\limits_{j=1}^n  a_{ij}\frac{b_j p_j}{\sum\limits_{s=1}^n a_{sj}p_s}=b_i \quad i \in I,     $$
\begin{eqnarray}\label{mari1}
   \sum\limits_{j=1}^n a_{ij}\frac{b_j p_j}{\sum\limits_{s=1}^n a_{sj}p_s}<b_i \quad i \in J,     
\end{eqnarray}
in the set $P=\{p=\{p_i\}_{i=1}^n,\ p_i \geq 0,\ i=\overline{1,n}, \ \sum\limits_{i=1} ^n p_i=1\}$
is a solution of the system of equations
\begin{eqnarray}\label{mari2}
z_i=\frac{\bar b_i p_i}{\sum\limits_{s=1}^n a_{si}p_s}, \quad i=\overline{1,n},
\end{eqnarray}
where $\bar b=\{\bar b_i\}_{i=1}^n,$ $\bar b=A z,$ the vector $z=\{z_i\}_{i=1}^n$ is determined as follows
 $z_i=z_i^0, \ i \in I, z_i=0, \ i \in J.$ The non-negative  vector $z_0^I=\{z_i^0\}_{i \in I}$ satisfies the system of equations and inequalities
\begin{eqnarray}\label{100kolja9}
   \sum\limits_{j \in I} a_{ij} z_j^0=b_i \quad i \in I,
\end{eqnarray}
\begin{eqnarray}\label{100kolja10}
  \sum\limits_{j \in I} a_{ij}z_j^0<b_i \quad i \in J.
\end{eqnarray}
 \end{te}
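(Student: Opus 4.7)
My plan is to reduce the nonlinear system (\ref{mari1}) to a linear system supported on the index set $I$ by a direct substitution, and then read off both the representation (\ref{mari2}) and the reduced system (\ref{100kolja9})--(\ref{100kolja10}). The starting point is Lemma \ref{mytinq2}: applied to the given equilibrium vector $p \in P$ (which solves (\ref{mari1}) with $b$ strictly positive), it yields $p_i = 0$ for every $i \in J$.

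Next, for each $i \in I$ I would set
\begin{equation*}
z_i^0 \;:=\; \frac{b_i\, p_i}{\sum_{s=1}^n a_{si}\, p_s},
\end{equation*}
with the natural convention that the fraction equals $0$ whenever $p_i = 0$, so that $z_i^0 \ge 0$. Extending this to all indices by $z_i := z_i^0$ for $i \in I$ and $z_i := 0$ for $i \in J$ produces a non-negative vector $z$ supported in $I$. Since $p_j = 0$ for $j \in J$, only indices $j \in I$ contribute to the sums appearing in (\ref{mari1}), so for every $i$ one gets
\begin{equation*}
\sum_{j=1}^n a_{ij}\, \frac{b_j\, p_j}{\sum_{s} a_{sj}\, p_s} \;=\; \sum_{j \in I} a_{ij}\, z_j^0 \;=\; \sum_{j=1}^n a_{ij}\, z_j \;=\; [Az]_i.
\end{equation*}
Inserting this identity into (\ref{mari1}) converts the equality for $i \in I$ and the strict inequality for $i \in J$ into exactly (\ref{100kolja9})--(\ref{100kolja10}), which is the required system for $z_0^I = \{z_i^0\}_{i \in I}$.

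It then remains to verify (\ref{mari2}) with $\bar b := Az$. From (\ref{100kolja9}) one has $\bar b_i = b_i$ for every $i \in I$, so for $i \in I$ the definition of $z_i^0$ reads
\begin{equation*}
z_i \;=\; z_i^0 \;=\; \frac{b_i\, p_i}{\sum_s a_{si}\, p_s} \;=\; \frac{\bar b_i\, p_i}{\sum_s a_{si}\, p_s},
\end{equation*}
which is (\ref{mari2}); for $i \in J$ both $p_i$ and $z_i$ vanish, so (\ref{mari2}) reduces to $0 = 0$. The only non-routine point — and the one I expect to be the main obstacle — is the handling of the denominator $\sum_s a_{si}\, p_s$ when it is zero. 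One resolves this either by adopting the convention $0/0 = 0$ uniformly in (\ref{mari1}) and (\ref{mari2}), or by noting that the equality in (\ref{mari1}) together with $b_i > 0$ forces $\sum_s a_{si}\, p_s > 0$ whenever $p_i > 0$, so that every fraction appearing is in fact well defined. Apart from this bookkeeping, the argument is a single substitution.
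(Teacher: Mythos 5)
Your proposal is correct and follows essentially the same route as the paper's proof: invoke Lemma \ref{mytinq2} to conclude $p_i=0$ for $i\in J$, define $z_j^0=b_jp_j/\sum_s a_{sj}p_s$ on $I$ and extend by zero, and read off (\ref{100kolja9})--(\ref{100kolja10}) and (\ref{mari2}) by substitution. Your explicit verification that $\bar b_i=b_i$ on $I$ and your remark about degenerate denominators are small clarifications of steps the paper leaves implicit, not a different argument.
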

\begin{proof}
Let there exist a solution of the system of equations and inequalities (\ref{mari1}) with respect to the vector $p=\{p_i\}_{i=1}^n$ in the set $P.$  Due to Lemma \ref{mytinq2}, the components of the equilibrium price vector
$p=\{p_k\}_{k=1}^n$ solving the set of equations and inequalities  (\ref{mari1})  are such that $p_i=0, \ i \in J.$ 
The remaining components $p_i, \  i \in I,$ of the vector $p$ are the solution of the system of equations and inequalities
\begin{eqnarray}\label{vasja}
  \sum\limits_{j\in I} a_{ij}\frac{b_j p_j}{\sum\limits_{s\in I } a_{sj}p_s}=b_i \quad i \in I,
 \end{eqnarray}
 \begin{eqnarray}\label{vasja1}
  \sum\limits_{j\in I} a_{ij}\frac{b_j p_j}{\sum\limits_{s\in I } a_{sj}p_s}<b_i \quad i \in J.
 \end{eqnarray}
Let us  introduce the denotation
 \begin{eqnarray}\label{vasja1} 
z_j^0= \frac{b_j p_j}{\sum\limits_{s\in I } a_{sj}p_s}, \quad j \in I.
 \end{eqnarray}
It is evident that  the  equalities and inequalities 
 $$ \sum\limits_{j \in I} a_{ij} z_j^0=b_i \quad i \in I, $$
 $$ \sum\limits_{j \in I} a_{ij}z_j^0<b_i \quad i \in J, $$
 are valid.
If we introduce a vector
$z=\{z_i\}_{i=1}^n,$ where $z_i=z_i^0, \ i \in I, z_i=0, \ i \in J,$
then we obtain that the vector $z$ satisfies a system of equations and inequalities
$$ \sum\limits_{j =1}^na_{ij} z_j=b_i \quad i \in I, $$
 $$ \sum\limits_{j=1} ^na_{ij}z_j<b_i \quad i \in J. $$
The price vector
$ p=\{p_i\}_{i=1}^n,$ being a  solution of the system of equations (\ref{mari2}), is also a solution of the system equations
\begin{eqnarray}\label{mari3}
   \sum\limits_{j =1}^n a_{ij}\frac{\bar b_j p_j}{\sum\limits_{s=1}^n a_{sj}p_s}=\bar b_i \quad i=\overline{1,n},
\end{eqnarray}
due to the fact that $\bar b=Az.$ 
Theorem \ref{myktina19} is proved.
\end{proof}

\begin{te}\label{MykHon1}
Let  $A=||a_{ij}||_{i,j =1}^n$ be a  non negative non zero  matrix and let   $b=\{b_i\}_{=1}^n$ be a strictly positive vector. The necessary and sufficient condition of the existence
of the equilibrium  price vector  $p=\{p_i\}_{i=1}^n$ which is a solution of the system of equations and inequalities
  $$ \sum\limits_{j=1}^n a_{ij}\frac{b_j p_j}{\sum\limits_{s=1}^n a_{sj}p_s}=b_i \quad i \in I, $$
\begin{eqnarray}\label{MykHon2}
   \sum\limits_{j=1}^n a_{ij}\frac{b_j p_j}{\sum\limits_{s=1}^n a_{sj}p_s}<b_i \quad i \in J,
\end{eqnarray}
in the set $P=\{p=\{p_i\}_{i=1}^n,\ p_i \geq 0,\ i=\overline{1,n}, \ \sum\limits_{i=1} ^n p_i=1\}$
 is the existence of a non negative solution of the system of equations and inequalities
$$  \sum\limits_{j\in I}a_{ij} z_j=b_i \quad i \in I,     $$
\begin{eqnarray}\label{MykHon3}
   \sum\limits_{j\in I} a_{ij}z_j<b_i \quad i \in J, 
\end{eqnarray}
for a certain non empty set $I.$ 
\end{te}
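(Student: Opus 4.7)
The plan is to establish the two implications separately, leveraging the preparatory results of this section.

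For necessity, I would assume that $p=\{p_i\}_{i=1}^n \in P$ solves (\ref{MykHon2}) and invoke Theorem \ref{myktina19} directly. That theorem's construction produces, from any such $p$, a non-negative vector $\{z_j^0\}_{j\in I}$ with $z_j^0 = b_j p_j / \sum_{s\in I} a_{sj}p_s$ satisfying the equations (\ref{100kolja9}) and inequalities (\ref{100kolja10}), which are precisely the requirements in (\ref{MykHon3}). Thus necessity follows in one step, with no extra work.

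For sufficiency, I would start from a non-negative solution $\{z_j^0\}_{j\in I}$ of (\ref{MykHon3}) and extend it to a full vector $z=\{z_i\}_{i=1}^n$ by setting $z_i = z_i^0$ for $i \in I$ and $z_i = 0$ for $i \in J$. Define $\bar b = Az$, so that $\bar b_i = b_i$ for $i \in I$ and $0 \le \bar b_i < b_i$ for $i \in J$. The strategy is then to apply Theorems \ref{TtsyVtsja5} and \ref{2TtsyVtsja4} to produce $p \in P$ with $z_i = \bar b_i p_i / \sum_s a_{si}p_s$. Once this is in hand, substituting back gives $\sum_j a_{ij}\bar b_j p_j / \sum_s a_{sj}p_s = \sum_j a_{ij} z_j$, which is exactly $b_i$ for $i \in I$ and strictly less than $b_i$ for $i \in J$. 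To pass from the formula involving $\bar b_j$ to the one in (\ref{MykHon2}) involving $b_j$, I would use that $p_j=0$ for $j\in J$, which forces $b_j p_j = \bar b_j p_j = 0$ and makes the two formulas coincide.

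The main obstacle is that Theorem \ref{2TtsyVtsja4} requires $\bar b$ to be strictly positive, whereas here $\bar b_i$ may vanish for some $i\in J$. I would bypass this by first invoking Lemma \ref{mytinq2} to observe that any candidate price vector for (\ref{MykHon2}) must satisfy $p_j = 0$ for $j \in J$; this legitimates reducing the fixed-point construction to the principal submatrix $A_{II} = (a_{ij})_{i,j \in I}$ and the vector $(b_i)_{i \in I}$, which is strictly positive. Applying \ref{2TtsyVtsja4} to this reduced data yields $\{p_j\}_{j\in I}$ solving $z_i^0 = b_i p_i / \sum_{s \in I} a_{si}p_s$ for $i \in I$; extending by zero on $J$ and normalizing to $\sum_{i=1}^n p_i = 1$ produces the required $p \in P$. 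The strict inequalities for $i \in J$ then reduce to the inequalities $\sum_{j \in I} a_{ij} z_j^0 < b_i$ that are already part of (\ref{MykHon3}).
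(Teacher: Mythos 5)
Your proposal is correct and follows essentially the same route as the paper's own proof: necessity via Lemma \ref{mytinq2} (forcing $p_j=0$ for $j\in J$) and the construction $z_j^0=b_jp_j/\sum_{s\in I}a_{sj}p_s$ already packaged in Theorem \ref{myktina19}, and sufficiency by restricting to the principal submatrix indexed by $I$, applying Theorem \ref{2TtsyVtsja4} to the reduced (strictly positive) data, and extending the resulting price vector by zero on $J$. The only difference is presentational — you cite Theorem \ref{myktina19} for necessity where the paper repeats its argument — and both treatments share the same implicit convention for the terms with $p_j=0$ in the denominator sums.
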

\begin{proof}
Necessity.
Let there exist a solution of the system of equations and inequalities (\ref{MykHon2}) with respect to the vector $p=\{p_i\}_{i=1}^n$ in the set $P.$  Due to Lemma \ref{mytinq2},  we have that $p_i=0, \ i \in J,$ where $I$ is a non empty set.
The remaining components $p_i, \ i \in I,$ of the vector $p$ are the solution of the system of equations and inequalities
\begin{eqnarray}\label{MykHon4}
  \sum\limits_{j\in I} a_{ij}\frac{b_j p_j}{\sum\limits_{s\in I } a_{sj}p_s}=b_i \quad i \in I,
 \end{eqnarray}
 \begin{eqnarray}\label{MykHon5}
  \sum\limits_{j\in I} a_{ij}\frac{b_j p_j}{\sum\limits_{s\in I } a_{sj}p_s}<b_i \quad i \in J.
 \end{eqnarray}
Let's introduce the denotation
$$ z_i=\frac{b_i p_i}{\sum\limits_{s \in I}a_{si} p_s},\quad i \in I.$$
Then the equalities and inequalities 
 $$ \sum\limits_{j \in I} a_{ij} z_j=b_i \quad i \in I, $$
 $$ \sum\limits_{j \in I} a_{ij}z_j<b_i \quad i \in J. $$
 are valid.
It is obvious that $ z_i\geq 0, i \in I.$ The necessity is established.

Sufficiency. If there exists a non negative solution of the system of equations and inequalities (\ref{MykHon3}),  then $b_i=[A z]_i>0,\  i \in I.$  
Due  to Theorem \ref{2TtsyVtsja4},  there exists a solution $p_0^I=\{p_i^0\}_{i\in I}$  of the set of equations
$$ z_i=\frac{\bar b_i p_i^0}{ \sum\limits_{s\in I} a_{si}p_s^0}, \quad i \in I, $$
in the set $P=\{p=\{p_i\}_{i\in I}, p_i\geq 0, i \in I, \sum\limits_{i \in I}p_i=1\}$ for  non negative nonzero vector $z=\{z_i\}_{i\in I},$ where $\bar b=\{\bar b_i\}_{i\in I},$ $ \bar b_i=\sum\limits_{i \in I}a_{ij} z_j, i \in I.$

Therefore, the vector $p_0=\{p_i^0\}_{i=1}^n$ is a solution of the system of equations 
 \begin{eqnarray}\label{MykHon6}
\sum\limits_{i \in I}a_{ki} \frac{p_i^0 b_i}{\sum\limits_{s \in I}a_{si}p_s^0} =b_k, \quad k\in I,
\end{eqnarray}
and inequalities
\begin{eqnarray}\label{MykHon7}
\sum\limits_{i \in I}a_{ki} \frac{p_i^0 b_i}{\sum\limits_{s \in I}a_{si}p_s^0} < b_k, \quad k\in J.
\end{eqnarray}
 Let's construct the equilibrium vector of prices $p=\{p_i\}_{i=1}^n $ by setting $p_i=0, i \in J,$ and putting $p_i=p_i^0, i \in I.$  The price vector constructed in this way is the solution of the system of equations and inequalities (\ref{MykHon2}).
Theorem \ref{MykHon1} is proved.
\end{proof}
 The question arises for which sets $I$ there is a non-negative solution of the system of equations and inequalities (\ref{MykHon3}).
 The answer to this question is provided by Theorem \ref{TVYA}.
 
 Let us construct the matrix $C$, which appears in Theorem \ref{TVYA}. Let's put $C^I=||a_{k i}||_{k=\overline{1,n}, i \in I}.$ Then the matrix $C^I $ has dimension $n\times |I|,$ where $|I|$ is a capacity of the set $I.$
 
\begin{ce}\label{rem1}
The solution of the set of equations and inequalities (\ref{MykHon3}) exists if the matrix $C^I$ is such that  the conditions of Theorem \ref{TVYA}  are fulfilled.
\end{ce}
Let us denote $Z_0$ the set of all solutions of the set of equations and inequalities (\ref{MykHon3}) when the set $I\subset N$ runs the set of all subsets of the set $N$ for which the solution of the set of equations and inequalities (\ref{MykHon3}) exists.
For any vector $z_0 \in Z_0$, let
\begin{eqnarray}\label{TVYA1}
\bar b=\sum\limits_{i \in I} z_i a_i
\end{eqnarray}
for a certain $I \subset N$.
Then, $I=\{i, \bar b_i=b_i\}$ is a nonempty set and the vector $\bar b$
we call the vector of real consumption. In accordance with Theorem \ref{TtsyVtsja5}, \ref{2TtsyVtsja4} it corresponds to the equilibrium price vector $p_0=\{p_i^0\}_{i=1}^n,$ which is the solution of the system of equations (\ref{mari3}).

For  part of goods, the indices of which are included in the set $J=N_0\setminus I,$ the equilibrium price is $p_i^0=0, \ i \in J.$ The latter means that industries the indexes of which belongs to the set $J$ need subsidies.
  But certain funds were spent on their production, which are called the cost of these goods.
Let's introduce the generalized equilibrium price vector  by putting $p_u=\{p_i^u\}_{i=1}^n$ $p_i^u=p_i^0, \ i \in I,$ \ $ p_i^u=p_i^ c, i \in J,$ where $p_i^c=\sum\limits_{s \in I}a_{si}p_s^0$ is the cost price of the produced goods.  Each such equilibrium state  we characterize by the level of excess supply
\begin{eqnarray}\label{zaTVYA1}
R=\frac{\langle b-\bar b, p_u\rangle}{\langle b, p_u\rangle},
\end{eqnarray}
where $\langle x, y\rangle=\sum\limits_{i=1}^n x_i y_i, \ x=\{x_i\}_{i=1}^n, \ y=\{y_i\}_ {i=1}^n.$

Finding solutions of the system of equations and inequalities (\ref{mykt17}), (\ref{mykt18}) with the smallest excess supply  will require finding all possible solutions of such a system of equations and inequalities and finding among them the minimum excess supply, which can turn out to be an infeasible problem for large dimensions of the matrix $A.$ Based on Theorem \ref{TVYA} and Proposition \ref{g1} below, the solution of this problem is proposed as a quadratic programming problem.

\begin{de}\label{myktinavitka1}
Let $A$ be an indecomposable nonnegative matrix, and let $b$ be a strictly positive vector that does not belong to the cone formed by the column vectors of the matrix $A.$
The solution $z_0$ of the quadratic programming problem
\begin{eqnarray}\label{myktinavitka2}
\min\limits_{z \in Z_0}\sum\limits_{i=1}^n[b_i-\sum\limits_{k=1}^n a_{ik} z_k]^2
\end{eqnarray}
 corresponds to the real consumption vector $\bar b=A z_0\leq b.$
Assume that for the non-empty set $I=\{i, \bar b_i=b_i\}$ there exists an equilibrium price vector  $p_0=\{p_i^0\}_{i=1}^n$ which is a solution of the system equations
\begin{eqnarray}\label{20myktina21}
\sum\limits_{i=1}^n a_{ki} \frac{p_i^0\bar b_i}{\sum\limits_{s=1}^n a_{si}p_s^0}=\bar b_k, \quad k=\overline{1,n}.
\end{eqnarray}
 Then the value
\begin{eqnarray}\label{jajaja1}
  R=\frac{\langle b -\bar b, p_u  \rangle}{\langle b, p_u \rangle }
\end{eqnarray}
   will be called the generalized excess supply  corresponding to the generalized equilibrium vector of prices $p_0,$ where
$ \langle x,y \rangle=\sum\limits_{i=1}^n x_i y_i,$
$x=\{x_i\}_{i=1}^n, \ y=\{y_i\}_{i=1}^n.$
\end{de}
According to the formula (\ref{jajaja1}), the level of excess supply  for the generalized equilibrium vector is the smallest. This is the state of economic equilibrium below which the economic system cannot fall.
If this value is quite large, then the economic system may fall into a state of recession (see \cite{4Gonchar}, \cite{3Gonchar}, \cite{5Gonchar}, \cite{7Gonchar}).

Therefore, if the taxation system does not coincide with the taxation system that ensures sustainable development, then the vector $\{(1-\pi_i)x_i\}_{i=1}^n$ does not belong to the interior of the cone formed by the vectors of the columns of the matrix $A(E-A )^{-1}$ (see Theorem 16 in \cite{11Gonchar11}). This leads to the fact that in a state of economic equilibrium certain industries need subsidies for their existence.  As we can see, the reason for this is the taxation system, which leads to the fact that a certain part of the industries is unprofitable.

\section{Conclusions.}
The work describes all taxation systems that ensure sustainable development. An explicit form for such taxation systems was found and an equilibrium price vector corresponding to them was constructed.
This equilibrium price vector is determined from the condition of equality of supply and demand for resources and produced goods from them. The demand for resources and produced  goods is determined by both producers of products and consumers of finished products.
That is, the equilibrium price vector is formed both under the influence of market forces and the taxation system.
Under such an equilibrium price vector, the economy is able to function in the mode of sustainable development.
Among taxation systems that ensure sustainable development, there are perfect taxation systems that are characterized by the equality of the created gross added value in the industry of the value of the product created in the same industry.

The taxation system under which the economic system is able to function in the mode with subsidies is also completely  described.

 It is shown that in the mode of sustainable development with perfect tax system, the gross output vector in value indicators satisfies a certain system of linear homogeneous equations. As a consequence of this, the vector of the created product in value indicators coincides with the vector of created added values in the economic system.

But there are tax systems that do not coincide with tax systems that ensure sustainable development.
Under such taxation systems, only partial clearing of markets takes place. For this case, a complete description of all equilibrium states is given in which only partial clearing of the markets takes place.
For this, a complete description of all non negative  solutions of linear systems of equations and inequalities was given.
Based on this result, a complete description of all equilibrium states in which only partial clearing of the markets takes place was given.


\end{document}